
\documentclass{llncs}
\usepackage{amsmath}
\usepackage{amssymb}
\usepackage{enumerate}
\usepackage{varioref}
\usepackage{charter,eulervm}
\usepackage{graphicx}
\usepackage{boxedminipage}

\newcommand{\es}{\varnothing}

\title{{\sc On Retracts, Absolute Retracts, and Folds in Cographs}}

\author{
 Ton~Kloks\inst{1} 
\and
Yue-Li~Wang\inst{2} 
}
\institute{ 
 Department of Computer Science\\
 National Tsing Hua University, Taiwan\\
 {\tt kloks@cs.nthu.edu.tw} 
\and
 Department of Information Management\\
 National Taiwan University of Science and Technology\\
 {\tt ylwang@cs.ntust.edu.tw}
}

\pagestyle{plain}
\begin{document}

\maketitle

\begin{abstract}
Let $G$ and $H$ be two cographs. 
We show that the 
problem to determine whether $H$ is a retract of $G$  
is NP-complete. 
We show that this problem is fixed-parameter tractable when 
parameterized by the size of $H$. 
When restricted to the class of threshold graphs or 
to the class of 
trivially perfect graphs, the problem becomes tractable 
in polynomial time. The problem is also 
soluble in linear time 
when one cograph is given as an induced subgraph 
of the other. Foldings generalize retractions. 
We show that the problem to fold a 
trivially perfect graph onto a largest possible clique is 
NP-complete. For a threshold graph this folding number equals 
its chromatic number and achromatic number. We characterize 
the absolute retracts of cographs.  
\end{abstract}

\section{Introduction}

Graph homomorphisms have regained a lot of interest by the recent
characterization of Grohe of the classes of
graphs for which Hom$(\mathcal{G},-)$ is tractable~\cite{kn:grohe}.
To be precise,
Grohe proves that, unless $FPT=W[1]$, 
deciding whether there is a homomorphism from
a graph $G \in \mathcal{G}$ to some arbitrary 
graph $H$ is polynomial if
and only if the graphs in $\mathcal{G}$ have bounded treewidth
modulo homomorphic equivalence. The treewidth of a graph modulo
homomorphic equivalence is defined as the treewidth of its core, 
ie, a minimal retract.
This, and other recent results 
(see eg~\cite{kn:bulatov2,kn:bulatov,kn:dyer,kn:goldberg,kn:grohe2,%
kn:naserasr,kn:nesetril}), 
make it desirable to have algorithms that 
compute cores, or general retracts in graphs.  

\bigskip 

For any graph $G$, all the cores of $G$ are isomorphic 
subgraphs of $G$. Therefore, one speaks of {\em the core\/} 
of a graph. However, a fixed copy of the core in $G$ 
is not necessarily a retract. Therefore, when studying 
retracts or cores one usually assumes that 
the objective is given as an induced subgraph of $G$. 
When restricted to cographs, when $H$ is given as an 
induced subgraph of $G$, it can be determined in linear time 
whether $H$ is a retract. We prove this is 
Section~\ref{section partitioned}. In the rest of the paper 
we do not assume that the graph $H$ is given as an induced 
subgraph of $G$. In that case the problem turns out to be 
NP-complete. We prove that in Section~\ref{section NP-c}.   
 
\bigskip 

In this paper we consider the retract problem
for cographs. The related surjective graph homomorphism 
problem was 
recently studied in~\cite{kn:golovach}. In this paper it was 
shown that that the problem to decide whether there is a 
surjective homomorphism from one connected 
cograph to another connected cograph 
is NP-complete.  The surjective homomorphism 
problem is also NP-complete if 
both graphs are unions of complete graphs. 
Let us mention also 
the classic result of Damaschke, 
which is that the induced subgraph isomorphism problem 
is NP-complete for cographs~\cite{kn:damaschke}. 

\bigskip 

The retract problem for cographs can be perceived 
as a pattern recognition 
problem for labeled trees. Many pattern recognition variants 
have been investigated and classified, see 
eg,~\cite{kn:chung,kn:damaschke,kn:gotz,kn:kilpelainen2,kn:kilpelainen,%
kn:matousek,kn:pinter,kn:reyner,kn:shamir,kn:sikora}. 
This last manuscript \cite{kn:sikora} 
contains references to a lot of the work done on motifs in graphs.  
However, the pattern recognition problem that corresponds with the 
retract problem on cographs seems to have eluded all these 
investigations~\cite{kn:grohe3}. 

\bigskip 

For basic terminology on graph homomorphisms we refer 
to~\cite{kn:hahn,kn:hell4}. 

\begin{definition}
Let $G$ and $H$ be graphs. A homomorphism $\phi:G \rightarrow H$ 
is a map $\phi:V(G) \rightarrow V(H)$ which preserves edges, 
that is, 
\begin{equation}
\label{eqn1}
\{x,y\} \in E(G) \quad\Rightarrow\quad \{\phi(x),\phi(y)\} \in E(H).
\end{equation}
\end{definition}
We write $G \rightarrow H$ if there is a homomorphism 
$\phi: G \rightarrow H$. 

\bigskip 

Notice that 
\begin{equation}
G \rightarrow K_k \quad\Leftrightarrow\quad \chi(G) \leq k 
\quad\text{and also that}\quad 
K_k \rightarrow G \quad\Leftrightarrow\quad \omega(G) \geq k.
\end{equation}

\bigskip 

\begin{definition}
Let $G$ and $H$ be graphs. The graph $H$ is a retract 
of $G$ if there exist homomorphisms $\rho:G \rightarrow H$ and 
$\gamma: H \rightarrow G$ such that $\rho \circ \gamma = id_H$, 
which is the identity map $V(H) \rightarrow V(H)$. 
\end{definition}
The functions $\rho$ and $\gamma$ are called the 
retraction and co-retraction, respectively. 

\bigskip 

When $H$ is a retract of $G$ then $H$ is isomorphic to an induced 
subgraph of $G$. Since there are homomorphisms in two directions, 
$G$ and $H$ have the same clique number, chromatic number and 
odd girth. 
Also, there is a retraction from $G$ to $K_k$ if and only 
if $\chi(G)=\omega(G)=k$. 

\bigskip 

There is a homomorphism $G \rightarrow H$ if and only if 
the union of $G$ and $H$ retracts to $H$. For any graph $H$, 
checking if there is a homomorphism $G \rightarrow H$ is 
polynomial when $H$ is bipartite and it is NP-complete 
otherwise~\cite{kn:hell}. 
It follows that, for any graph $H$, checking if a graph $H$ is a 
retract 
of a graph $G$ is NP-complete, unless $H$ is bipartite. 
The problem remains NP-complete, even when $H$ is 
an even cycle of length at least six, given as an induced 
subgraph of $G$~\cite{kn:feder}. The question whether a graph $G$ 
has a homomorphism to itself which is not the identity is also 
NP-complete~\cite{kn:hell3}.   
 
\bigskip 

\begin{definition}
A graph is a cograph if it has no induced $P_4$, 
which is the path with 
four vertices.
\end{definition}

Since the complement of a $P_4$ is a $P_4$, cographs are closed 
under complementation. Actually, the class of 
cographs is the smallest class 
of graphs which is closed under complementation and taking unions.  

\bigskip 

A similar characterization of cographs reads as follows. 
A graph $G$ is a cograph if and only if one of the following holds. 
\begin{enumerate}[\rm (1)]
\item $G$ has only one vertex, or 
\item $G$ is disconnected and every component is a cograph, or 
\item the complement of $G$, $\Bar{G}$ is disconnected and every 
component of $\Bar{G}$ is a cograph. 
\end{enumerate}
It follows that cographs have a decomposition tree, called a 
cotree, defined as 
follows. The decomposition tree is a rooted tree $T$. There is a 
bijection from the leaves of $T$ to the vertices of $G$. 
When $G$ has at least two 
vertices then 
each internal node of $T$, including the root, 
is labeled as $\otimes$ or $\oplus$. 
The $\oplus$ label at a node 
takes the union of the graphs that correspond 
with the children of the node. The $\otimes$ label takes the join 
of the graphs that correspond with the children.   

\bigskip 

\begin{remark}
When defined as above, the labels of the internal nodes in 
any path from the root to a leaf alternate 
between $\oplus$ and $\otimes$. 
Alternatively, one frequently defines a cotree as a rooted 
{\em binary\/} tree, 
in which each internal node is labeled as $\oplus$ and $\otimes$.   
In this paper, when talking about cotrees, we always 
assume the first type of cotree. Thus, each child of the root corresponds 
with one component or, with one cocomponent of the graph.
\end{remark}

\begin{remark}
It is well-known that cographs are recognizable in linear 
time~\cite{kn:corneil2,kn:gioan}. 
A cotree has $O(n)$ nodes, where $n=|V(G)|$, 
and it can be obtained in linear time. 
\end{remark}

\bigskip 

This paper is organized as follows. In Sections~\ref{section threshold} 
and~\ref{section TP} we show that the retract problem is polynomial 
when restricted to the classes of threshold and trivially perfect graphs. 
In Section~\ref{section NP-c} we show that the problem is 
NP-complete for cographs. In Section~\ref{section partitioned} we show 
that, when $H$ is given as an induced subgraph of $G$, it can be determined 
in polynomial time whether $H$ is a retract of $G$. In 
Section~\ref{section FPT} we show that the retract problem 
for cographs is fixed-parameter 
tractable. 
In Section~\ref{section folding} we show that computing the 
folding number is NP-complete for trivially perfect graphs. For threshold 
graphs the folding number equals the chromatic and achromatic number. 
In Section~\ref{section conclusion} we mention some 
of our ideas for future research. 

\newpage 
 
\section{Retracts in threshold graphs}
\label{section threshold}

A subclass of the class of cographs is the class 
of threshold graphs. Threshold graphs are the graphs without 
induced $2K_2$, $C_4$ and $P_4$. We use the following characterization 
of threshold graphs. 
  
\begin{theorem}
A graph is a threshold graph 
if and only if every induced subgraph 
has a universal vertex or an isolated vertex. 
\end{theorem}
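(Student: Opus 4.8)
The plan is to prove the two implications separately. The ``if'' direction is immediate, and the ``only if'' direction reduces, by heredity, to a statement about a single graph, which is then handled by a short case analysis on connectivity.

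For the ``if'' direction I would only observe that each of the three forbidden induced subgraphs $2K_2$, $C_4$ and $P_4$ has exactly four vertices, in each of which every vertex has degree $1$ or $2$; in particular none of these three graphs has a vertex of degree $0$ (an isolated vertex) or of degree $3$ (a universal vertex). Consequently, if every induced subgraph of $G$ has an isolated or a universal vertex, then $G$ cannot contain any of $2K_2$, $C_4$, $P_4$ as an induced subgraph, so $G$ is a threshold graph.

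For the converse I would use that being $\{2K_2,C_4,P_4\}$-free is a hereditary property, so it is enough to prove that every nonempty threshold graph has an isolated vertex or a universal vertex; applying this to each induced subgraph then yields the theorem. So let $G$ be a nonempty threshold graph. If $|V(G)|=1$ the claim is trivial. If $G$ is disconnected, then either it has an isolated vertex and we are done, or every component contains an edge, and then two edges chosen from two different components induce a $2K_2$, a contradiction; hence a disconnected threshold graph has an isolated vertex. Now suppose $G$ is connected and, for contradiction, that $G$ has no universal vertex. Pick a vertex $v$ of maximum degree and a non-neighbor $u$ of $v$, and, using connectivity, a shortest path $v=x_0,x_1,\dots,x_k=u$; note $k\ge 2$. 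If $k\ge 3$, then $\{x_0,x_1,x_2,x_3\}$ induces a $P_4$. If $k=2$, then, since $\deg v\ge\deg x_1$ while $u$ and $v$ both lie in $N(x_1)\setminus N(v)$, there is a vertex $y\in N(v)\setminus N[x_1]$, and one checks that $y$ is distinct from $u,x_1,v$ and that $\{u,x_1,v,y\}$ induces a $P_4$ if $u$ and $y$ are non-adjacent and a $C_4$ if they are adjacent. Either way we contradict the hypothesis that $G$ is a threshold graph, so a connected threshold graph has a universal vertex.

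The only delicate point, and the step I expect to be the main obstacle, is the case $k=2$: one must exhibit the extra neighbor $y$ of $v$ outside the closed neighborhood of $x_1$ — this is exactly where the maximality of $\deg v$ enters, together with the observation that both $u$ and $v$ lie in $N(x_1)\setminus N(v)$ — and then verify in the two sub-cases that the induced subgraph on $\{u,x_1,v,y\}$ is precisely a $P_4$ or a $C_4$. Everything else is routine.
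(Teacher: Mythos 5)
Your argument is correct. Note that the paper states this characterization without proof (it is a classical fact about threshold graphs, usually attributed to Chv\'atal and Hammer), so there is no in-paper argument to compare against; your write-up supplies a complete elementary proof. The ``if'' direction via the degree observation on $2K_2$, $C_4$, $P_4$ is fine, and the reduction of the ``only if'' direction to a single-graph statement via heredity is legitimate since $\{2K_2,C_4,P_4\}$-freeness is closed under induced subgraphs. The one step that needed care --- producing, in the case $k=2$, a vertex $y\in N(v)\setminus N[x_1]$ --- does go through: both $u$ and $v$ lie in $N(x_1)\setminus N(v)$, so $|N(x_1)\setminus N(v)|\ge 2$, and $\deg v\ge\deg x_1$ forces $|N(v)\setminus N(x_1)|\ge 2$; since $x_1$ accounts for only one element of that set, the desired $y$ exists, and the case split on the pair $\{u,y\}$ yields an induced $P_4$ or $C_4$ as claimed.
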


\begin{theorem}
\label{thm th}
Let $G$ and $H$ be threshold graphs. There exists a 
linear-time algorithm to check if $H$ is a retract of $G$. 
\end{theorem}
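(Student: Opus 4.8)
The plan is to establish the clean criterion that, for threshold graphs $G$ and $H$, the graph $H$ is a retract of $G$ if and only if $H$ is isomorphic to an induced subgraph of $G$ and $\omega(H)=\omega(G)$, and then to read a linear-time test off the creation-sequence encoding of threshold graphs. Recall that, by the characterization above, a threshold graph arises by repeatedly adding an isolated vertex (record a $0$) or a universal vertex (record a $1$); this produces a \emph{creation sequence} $s\in\{0,1\}^{n}$, which one may take with $s_1=0$, and in the resulting vertex order $v_1,\dots,v_n$ one checks that $v_a\sim v_b$ (for $a<b$) if and only if $s_b=1$. I would extract three consequences: the $0$-vertices form an independent set; the set $D_G$ of $1$-vertices is a clique with $\omega(G)=|D_G|+1$; and, since a clique meets an independent set in at most one vertex, every maximum clique of $G$ consists of $D_G$ together with a single $0$-vertex adjacent to all of $D_G$, which I will call a leading vertex.

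Necessity of the criterion is immediate from the remarks preceding the statement: a retract is an induced subgraph, and the homomorphisms in both directions force $\omega(H)=\omega(G)$. For sufficiency, suppose $H\cong G[W]$ with $\omega(G[W])=\omega(G)$. Then $G[W]$ contains a clique of size $\omega(G)$, that is, a maximum clique $Q$ of $G$; by the third consequence above, $D_G\subseteq Q\subseteq W$ and $W$ contains a leading vertex $u$. Define $\rho:V(G)\to W$ to be the identity on $W$ and to send every vertex of $V(G)\setminus W$ to $u$. Each such vertex is a $0$-vertex (the $1$-vertices lie in $D_G\subseteq W$), its neighbourhood is contained in $D_G\subseteq W$ and is fixed by $\rho$, and $u$ is adjacent to all of $D_G$; moreover, since $V(G)\setminus W$ is independent, no edge of $G$ joins two vertices sent to $u$. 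Hence $\rho$ is a homomorphism $G\to G[W]$ restricting to the identity on $W$, and with $\gamma$ the inclusion $G[W]\hookrightarrow G$ we obtain $\rho\circ\gamma=id$, so $H\cong G[W]$ is a retract of $G$.

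To turn this into a linear-time algorithm I would compute the creation sequences $s$ of $G$ and $t$ of $H$ (this can be done in linear time by repeatedly removing a universal or an isolated vertex). Because $\omega$ equals one plus the number of $1$'s, and because for threshold graphs ``$H$ is an induced subgraph of $G$'' is equivalent to ``$t$ is a subsequence of $s$'' -- for the forward direction one only needs that a leading $1$ in the subgraph's creation sequence can be re-matched to $v_1$ -- the test becomes: reject if $s$ and $t$ contain different numbers of $1$'s, and otherwise accept exactly when a greedy left-to-right scan embeds $t$ as a subsequence of $s$. Each step is linear, and on acceptance the matched positions yield $W$ and the map $\rho$ above furnishes the retraction explicitly.

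The step I expect to be the crux is the structural input underlying the reduction: the fact that every maximum clique of a threshold graph contains all of its $1$-vertices (this is precisely what makes $\omega(H)=\omega(G)$ force $D_G\subseteq W$ and supply the vertex $u$), together with the equivalence between induced-subgraph containment and subsequence containment of creation sequences. Both follow quickly once the adjacency rule $v_a\sim v_b\iff s_{\max(a,b)}=1$ is in place, but establishing that rule, and being careful about normalizing a leading $1$, is where the special structure of threshold graphs is genuinely used; the homomorphism $\rho$ and the linear-time bookkeeping are then routine.
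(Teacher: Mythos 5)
Your proposal is correct, but it takes a genuinely different route from the paper. The paper argues inductively along the elimination ordering: it shows by an exchange argument that one may assume a universal vertex of $G$ is matched to a universal vertex of $H$ (and isolated vertices to isolated vertices, or collapsed onto a universal vertex of $H$), and then recurses on $G-x_1$ and $H-y_1$. You instead prove a closed-form characterization --- for threshold graphs, $H$ is a retract of $G$ if and only if $H$ embeds as an induced subgraph and $\omega(H)=\omega(G)$ --- by exhibiting an explicit one-shot retraction that collapses all of $V(G)\setminus W$ onto a single $0$-vertex dominating the clique $D_G$ of $1$-vertices, and you then reduce the induced-subgraph test to subsequence containment of creation sequences. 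Your version buys a cleaner structural statement (which the paper never makes explicit) and an explicit retraction map; the paper's inductive scheme is the one that survives the passage to trivially perfect graphs in the next section, where your criterion genuinely fails (the paw is an induced subgraph of the butterfly with the same clique number but is not a retract, as the paper notes in Section~\ref{section absolute retracts}). Two small points to tighten: the forward direction of ``induced subgraph iff subsequence'' silently uses that the creation sequence of a threshold graph is unique once the first symbol is normalized to $0$ (because a threshold graph on at least two vertices cannot have both an isolated and a universal vertex, the peeling is deterministic except at the last remaining vertex); you flag the leading-symbol issue, but this uniqueness is the fact that justifies comparing the embedded subsequence $(s_{i_1},\dots,s_{i_m})$ with $t$ position by position. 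And the claim that creation sequences are computable in linear time deserves a word (e.g., read them off the degree sequence after a counting sort), since naive repeated deletion is not obviously linear.
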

\begin{proof}
Assume that $H$ is a retract of $G$ and let $\rho$ and 
$\gamma$ be the retraction and co-retraction. 

\medskip 

\noindent
Assume that $G$ has a universal vertex, say $x_1$. 
Then $H$ must have a universal vertex as well, 
since a retract of a connected graph is connected. 
Let $y_1$ be a universal vertex of $H$. 
Let $y_i=\rho(x_1)$. Since $\rho$ is a homomorphism it preserves 
edges,  
and since $x_1$ is universal in $G$, $\rho$ maps no other 
vertex of $G$ to $y_i$. Notice also that $\gamma(y_i)=x_1$ 
since $\rho \circ \gamma=id_H$ and $\rho$ maps no other 
vertex to $y_i$. 

\medskip 

\noindent
Assume that $y_i \neq y_1$. Let $\gamma(y_1)=x_{\ell}$. 
Then $x_{\ell} \neq x_1$ since $\gamma$ preserves edges 
and so 
\[\{y_1,y_i\} \in E(H) \quad
\Rightarrow\quad \{\gamma(y_1),\gamma(y_i)\}=\{x_{\ell},x_1\} \in E(G) 
\quad\Rightarrow\quad x_{\ell} \neq x_1.\] 
Furthermore, 
since $y_1$ is universal, $\gamma$ maps no other vertex of $H$ to 
$x_{\ell}$. Of course, since $\rho \circ \gamma=id_H$, 
$\rho(x_{\ell})=y_1$.  

\medskip 

\noindent
We claim that $y_i$ is universal in $H$, and therefore 
exchangeable with $y_1$. 
Assume not and let $y_s \in V(H)$ be another vertex of $H$ 
not adjacent to $y_i$. Let $\gamma(y_s)=x_p$. 
Then $x_p \neq x_1$ since $\rho \circ \gamma = id_H$ and 
$\rho(x_1)=y_i \neq y_s$.  
Now, since $\rho$ is a homomorphism, 
\[\{x_1,x_p\} \in E(G) \quad\Rightarrow\quad 
\{\rho(x_1),\rho(x_p)\} = \{y_i,y_s\} \in E(H),\] 
which is a contradiction. Therefore, we may assume that $y_i=y_1$. 
\begin{center}
\begin{boxedminipage}[h]{6cm}
That is, from now on we assume that  
\[\rho(x_1)=y_1 \quad\text{and}\quad \gamma(y_1)=x_1.\]  
\end{boxedminipage}
\end{center}

\medskip 

\noindent 
This proves that, 
when $G$ is connected then $H$ is a retract of $G$ 
if and only if $H-y_1$ is a retract of $G-x_1$. 
By the way, notice that if $|V(H)|=1$ then $H$ can be a retract of $G$ 
only if $G$ is an independent set, so this case is easy to check. 

\medskip 

\noindent
Finally, assume that $G$ is not connected. Since $G$ has no induced 
$2K_2$, all components, except possibly one, have only one vertex. 
The number of components of $H$ can be at most equal to the 
number of components of $G$, since $\rho$ maps components 
in $G$ to components of $H$, and $\rho \circ \gamma=id_H$, and so 
any two components of $H$ are mapped by $\gamma$ to different 
components of $G$. 

\medskip 

\noindent
First assume that $H$ is also disconnected. 
Let $x_1,\dots,x_a$ be the isolated vertices of $G$ and 
let $y_1, \dots,y_b$ be the isolated vertices of $H$. 
Let $\rho(x_i)=y_i$ and $\gamma(y_i)=x_i$ for $i\in \{1,\dots,b\}$ 
and let $\rho(x_{b+1})=\dots=\rho(x_a)=y_b$. 
Now, $H$ is a retract of $G$ if and only if 
$H-\{x_1,\dots,x_b\}$ is a retract of $G-\{x_1,\dots,x_a\}$. 

\medskip 

\noindent
If $H$ is connected, with at least two vertices,  
then let $y_1$ be a universal vertex and let 
$\rho(x_1)=\dots=\rho(x_a)=y_1$. If 
$H$ is a retract of $G$ then $G$ must  
have exactly one component with 
at least two vertices, 
since $G$ is a threshold graph and $\rho$ is a 
homomorphism. 
Let $x_u$ be the universal 
vertex of that component and define $\rho(x_u)=y_1$ 
and $\gamma(y_1)=x_u$. In this case, $H$ is a retract 
if and only if $H-y_1$ is a retract of $G-\{x_1,\dots,x_a,x_u\}$. 

\medskip 

\noindent
An elimination ordering, which eliminates successive 
isolated and universal vertices in a threshold graph, can be 
obtained in linear time. This proves the theorem.  
\qed\end{proof}

\section{Retracts in trivially perfect graphs}
\label{section TP}

\begin{definition}[\cite{kn:golumbic,kn:wolk}]
A graph $G$ is trivially perfect if for all induced 
subgraphs $H$ of $G$, $\alpha(H)$ is equal to the number 
of maximal cliques in $H$. 
\end{definition}

Trivially perfect graphs are those graphs without 
induced $C_4$ and $P_4$.   

\begin{theorem}[\cite{kn:wolk}]
A graph is trivially perfect if and only if every 
connected induced subgraph has a universal vertex. 
\end{theorem}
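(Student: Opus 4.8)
The plan is to prove both directions through one combinatorial lemma: \emph{every connected graph with no universal vertex contains an induced $P_4$ or an induced $C_4$}. Granting this, the forward implication is immediate. If $G$ is trivially perfect but some connected induced subgraph $H$ has no universal vertex, then $H$, hence $G$, contains an induced $P_4$ or $C_4$; but $\alpha(P_4)=2$ while $P_4$ has three maximal cliques, and $\alpha(C_4)=2$ while $C_4$ has four, so this induced subgraph already violates the defining condition of trivial perfection. (Equivalently, one may invoke the forbidden-subgraph description recorded above — the trivially perfect graphs are exactly the $\{P_4,C_4\}$-free graphs — together with the observation that $P_4$ and $C_4$ are connected and have no universal vertex.)

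For the converse I would show, by induction on $|V(H)|$ over all induced subgraphs $H$ of $G$, that the number of maximal cliques of $H$ equals $\alpha(H)$. If $H$ is disconnected, both quantities are additive over the connected components of $H$ (a clique lives in a single component), so the statement follows from the induction hypothesis. If $H$ is connected and has at least two vertices, pick a universal vertex $v$, which exists by hypothesis; then every maximal clique of $H$ contains $v$, so the maximal cliques of $H$ correspond bijectively (via $C\mapsto C\setminus\{v\}$) to those of $H-v$, whereas $\alpha(H)=\alpha(H-v)$ since a universal vertex lies in no independent set of size at least two. Applying the induction hypothesis to $H-v$ finishes the case; the one-vertex case is trivial. (Note that ``number of maximal cliques $\geq\alpha(H)$'' holds for every graph — assign to each vertex of a maximum independent set a maximal clique containing it, and observe these are pairwise distinct — so the content is really the reverse inequality, which is what the universal-vertex structure delivers.)

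It remains to prove the lemma, and this is where the actual work lies. Let $H$ be connected with no universal vertex, and choose a vertex $v$ of \emph{maximum degree}. Since $v$ is not universal, $V(H)\setminus N[v]\neq\es$, and connectedness gives a vertex $u\in N(v)$ adjacent to some $w\in V(H)\setminus N[v]$; here $v,u,w$ are distinct and $\{v,w\}\notin E(H)$. Put $A=N(v)\setminus N[u]$ and $B=N(u)\setminus N[v]$. Because $u\in N[v]$ and $v\in N[u]$, one has $A=N[v]\setminus N[u]$ and $B=N[u]\setminus N[v]$, hence $|A|-|B|=|N[v]|-|N[u]|\geq 0$ by the choice of $v$; since $w\in B$ we conclude $A\neq\es$. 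So there is a vertex $x$ adjacent to $v$ but not to $u$, distinct from $u,v,w$. The set $\{x,v,u,w\}$ carries the edges $\{x,v\},\{v,u\},\{u,w\}$ and the non-edges $\{x,u\},\{v,w\}$; if $\{x,w\}\notin E(H)$ it induces a $P_4$, and if $\{x,w\}\in E(H)$ it induces a $C_4$. The main obstacle is precisely this choice of the fourth vertex $x$: one must guarantee a neighbour of $v$ missing both $u$ and $w$, and the maximum-degree choice of $v$ is exactly what makes the count $|A|\geq|B|\geq 1$ go through — without it the conclusion fails, as $P_3$, whose middle vertex is universal, already shows.
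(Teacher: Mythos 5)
Your proof is correct, and it is worth noting that the paper itself offers no argument here: the statement is attributed to Wolk and cited without proof, so there is nothing internal to compare against. Your route is a sensible self-contained one. The key lemma --- a connected graph with no universal vertex contains an induced $P_4$ or $C_4$ --- is proved correctly: taking $v$ of maximum degree, the identity $|N(v)\setminus N[u]|-|N(u)\setminus N[v]|=|N[v]|-|N[u]|\ge 0$ for a neighbour $u$ of $v$ with a neighbour $w$ outside $N[v]$ does exactly what you need, since $w$ witnesses that the second set is nonempty and hence the first set supplies the fourth vertex $x$; the case analysis on the pair $\{x,w\}$ then yields $P_4$ or $C_4$. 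The converse induction is also sound: when $v$ is universal every maximal clique contains $v$, the map $C\mapsto C\setminus\{v\}$ is a bijection onto the maximal cliques of $H-v$ (maximality is preserved in both directions precisely because $v$ is universal), and $\alpha(H)=\alpha(H-v)$ for $|V(H)|\ge 2$. A small bonus of your argument is that it simultaneously establishes the forbidden-subgraph characterization (trivially perfect $=$ $\{P_4,C_4\}$-free) that the paper also asserts without proof just before the theorem; the two unproved assertions are derived together rather than one being assumed to get the other.
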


\begin{theorem}
\label{thm tp}
Let $G$ and $H$ be trivially perfect graphs. 
There exists an $O(N^{5/2})$ algorithm 
which checks if $H$ is a retract of $G$, where $N=|V(G)|\cdot|V(H)|$. 
\end{theorem}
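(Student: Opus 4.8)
The plan is to mimic the structure of the threshold-graph proof (Theorem~\ref{thm th}), peeling off universal vertices recursively, but now the recursion branches along the children of the cotree root and we must solve a matching problem at each internal node. Recall that a connected trivially perfect graph $G$ has a universal vertex; equivalently, the cotree root of a connected trivially perfect graph is a $\otimes$-node one of whose children is a single vertex, and after removing all the universal vertices the graph splits into connected trivially perfect components. So the recursive structure of a trivially perfect graph is: a (possibly empty) set of mutually adjacent universal vertices sitting on top of a disjoint union of smaller trivially perfect graphs $G_1,\dots,G_p$, and similarly $H$ has universal vertices on top of $H_1,\dots,H_q$. The first step is to prove the analogue of the boxed claim in Theorem~\ref{thm th}: if $H$ is connected then any retraction can be modified so that it maps the set $U_G$ of universal vertices of $G$ onto the set $U_H$ of universal vertices of $H$, with the co-retraction doing the reverse. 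The argument is the same exchangeability argument: a universal vertex of $G$ must map to a universal vertex of $H$ (else its image would be non-adjacent to something, but its preimage neighbourhood is everything), and by the $\rho\circ\gamma=\mathrm{id}$ condition these maps can be made bijective between $U_G$ and $U_H$; in particular we need $|U_G|\ge|U_H|$.

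After stripping the universal vertices, the problem reduces to: given disjoint unions $G_1\sqcup\cdots\sqcup G_p$ and $H_1\sqcup\cdots\sqcup H_q$ (each $G_i,H_j$ connected trivially perfect, hence each again has a universal vertex), decide whether the union of the $H_j$'s is a retract of the union of the $G_i$'s. Here the key observations are: the co-retraction $\gamma$ must send distinct $H_j$'s into distinct $G_i$'s (they are distinct components and $\gamma$ preserves adjacency; two vertices in different $H_j$'s are non-adjacent, but if they landed in the same $G_i$ they could be — wait, non-adjacency is not preserved, so more care is needed: but since $\rho\circ\gamma=\mathrm{id}$, the images $\gamma(H_j)$ for different $j$ are disjoint vertex sets, and since each $\gamma(H_j)$ is connected it lies in a single $G_i$; two different connected $\gamma(H_j),\gamma(H_{j'})$ could still a priori share a component, so we additionally argue that if $\gamma(H_j)\subseteq G_i$ then in fact $\rho$ restricted to $G_i$ must cover all of $H_j$, forcing an injective assignment). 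Granting this, we get a matching problem: build a bipartite graph with the $H_j$'s on one side and the $G_i$'s on the other, put an edge $\{H_j,G_i\}$ whenever $H_j$ is a retract of $G_i$, and then $\bigsqcup H_j$ is a retract of $\bigsqcup G_i$ if and only if this bipartite graph has a matching saturating all the $H_j$'s. One direction is immediate (combine the component retractions and co-retractions, sending the unmatched $G_i$'s arbitrarily into any image component respecting adjacency — in fact into a single vertex's image). The other direction uses the structural facts above together with the fact that a retraction restricted to one component is still a retraction onto a trivially perfect graph.

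This yields the recursive algorithm: to test whether $H$ is a retract of $G$, first check the universal-vertex count condition at the root, recurse on all pairs $(G_i,H_j)$ to fill in the bipartite adjacency matrix, then run bipartite matching. The running-time bookkeeping is where the $O(N^{5/2})$ comes from: there are $O(|V(G)|\cdot|V(H)|)$ pairs of cotree nodes $(u,v)$ with $u$ in $T_G$ and $v$ in $T_H$, and for each we solve one bipartite matching instance; the classical Hopcroft--Karp algorithm runs in $O(E\sqrt{V})$, and since the matching instances at a fixed pair have sizes bounded by the numbers of children, a global accounting shows the total edge count summed over all instances is $O(N)$ and each matching costs a further $\sqrt{\cdot}$ factor, giving $O(N^{5/2})$ overall. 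I expect the main obstacle to be the rigorous justification that the co-retraction induces an \emph{injective} assignment of $H$-components to $G$-components — one has to rule out two components of $H$ being ``served'' by a single component of $G$ — and the careful amortized running-time analysis that turns the naive product of (number of node pairs) and (cost of matching) into the claimed $O(N^{5/2})$ bound rather than something larger.
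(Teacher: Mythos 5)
Your overall architecture --- recurse on the cotree, strip universal vertices at the join stage, solve a bipartite matching over components at the union stage, and charge the work to pairs of cotree nodes to obtain $O(N^{5/2})$ --- is the paper's approach. But two of your key structural claims are wrong, and each makes the resulting algorithm give incorrect answers. First, the universal-vertex condition is backwards. The exchange argument shows that $\rho$ maps $U_G$ \emph{injectively into} $U_H$ (distinct universal vertices of $G$ are adjacent, so their images are distinct and each is universal in $H$); the necessary condition is therefore $|U_H|\ge|U_G|$, not $|U_G|\ge|U_H|$. Your stronger claim that the retraction can be modified to map $U_G$ \emph{onto} $U_H$ is false: $K_2$ is a retract of $P_3$ (fold one endpoint onto the other), yet $P_3$ has one universal vertex and $K_2$ has two, so your test rejects a yes-instance. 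Conversely, for $G=K_2\otimes(K_1\oplus K_1)$ and $H=K_1\otimes(K_1\oplus K_1)=P_3$ your test passes and the residual graphs are both $2K_1$, so you would accept, although $\omega(G)=3\neq 2=\omega(H)$ and $H$ is not a retract. The correct reduction, as in the paper, deletes all $k=|U_G|$ universal vertices of $G$ but only $k$ of the $\ell\ge k$ universal vertices of $H$, with a separate base case when $H$ is a clique.

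Second, your matching criterion for the disjoint unions is incomplete: a matching saturating the $H_j$'s is necessary but not sufficient. An unmatched component $G_i$ cannot be ``sent into a single vertex's image'' if it contains an edge; it needs a genuine homomorphism into some $H_j$, which for these graphs exists iff $\omega(G_i)\le\omega(H_j)$. Take $G=K_1\otimes(K_3\oplus K_1\oplus K_1)$ and $H=K_1\otimes(K_1\oplus K_1)$: after removing the universal vertices, the two singleton components of $G$ saturate both components of $H$, but the leftover $K_3$ admits no homomorphism into $K_1\oplus K_1$, and indeed $H$ is not a retract of $G$ since $\omega(G)=4\neq 2$. You must add the paper's extra condition that every unmatched component of $G$ maps homomorphically into some component of $H$. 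By contrast, the injectivity of the assignment of $H$-components to $G$-components, which you single out as the main obstacle, is the easy part: each $\gamma(H_j)$ is connected and hence lies in one component $G_i$, the sets $\gamma(H_j)$ are pairwise disjoint because $\rho\circ\gamma=\mathrm{id}_H$, and $\rho(G_i)$ is a connected subgraph of $H$, so it cannot contain two distinct components $H_j$ and $H_{j'}$. The running-time accounting is also not delicate: the paper simply bounds the total cost by $\sum_{i,j}\bigl(|C_i|\cdot|D_j|+(|C_i|+|D_j|)^{5/2}\bigr)=O(|V(G)|^{5/2}\cdot|V(H)|^{5/2})=O(N^{5/2})$, with no amortization needed.
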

\begin{proof}
Assume that $H$ is a retract of $G$. 
Let $C_1,\dots,C_t$ be the components of $G$ and let 
$D_1,\dots,D_s$ be the components of $H$. Then $s \leq t$. 
Without loss of generality, 
let $D_i$ be a retract of $C_i$ for $i \in \{1,\dots,s\}$. 
For the components $C_i$ with $i > s$, there must be a $j \leq s$ 
such that there is a homomorphism from $C_i$ to $D_j$. 

\medskip 

\noindent 
First assume that $G$ and $H$ are connected. 
Let $g_1,\dots,g_k$ be the universal vertices of $G$ and 
let $h_1,\dots,h_{\ell}$ be the universal vertices 
of $H$. As in the proof of Theorem~\ref{thm th} it follows 
that $H$ is a retract of $G$ if and only if 
\begin{enumerate}[\rm (i)]
\item $\ell \geq k$, and  
\item either $H$ is a clique and $\omega(G)=\omega(H)$ or 
$H-\{h_1,\dots,h_k\}$ is a retract of $G-\{g_1,\dots,g_k\}$. 
\end{enumerate}

\medskip 

\noindent
For the general case, consider the following bipartite graph $B$. 
The vertices of $B$ are the components of $G$ and $H$. 
There is an edge $\{C_i,D_j\} \in E(B)$ if and only if 
$C_i$ retracts to $D_j$. Then $G$ retracts to $H$ if and only if 
\begin{enumerate}[\rm (a)] 
\item $B$ has a matching which exhausts all components of $H$, and 
\item for every component $C_i$ which is not an endpoint 
of an edge in the matching 
there is a $D_j$ such that there is a 
homomorphism from $G[C_i]$ to $H[D_j]$. 
\end{enumerate}

\medskip 

\noindent
To check if a component $G[C_i]$ retracts to some $H[D_j]$ 
the algorithm greedily matches the universal vertices of 
$G[C_i]$ and $H[D_j]$ and checks 
if the remaining graph $G^{\prime}$, ie, after removal 
of the matched universal vertices, retracts to the remaining graph 
$H^{\prime}$. Let $C_i^1,\dots,C_i^{p}$ and $D_j^1,\dots,D_j^{q}$ 
be the components of $G^{\prime}$ and $H^{\prime}$. The algorithm 
constructs the bipartite graph $B_{ij}$ on the components 
$C_i^{k}$ and $D_j^{\ell}$, where $k \in \{1,\dots,p\}$ and 
$\ell \in \{1,\dots,q\}$. The algorithm checks if there is an 
edge $(C_i^k,D_j^{\ell}) \in E(B_{ij})$ 
in $O(1)$ time by table look-up, and 
so the bipartite graph $B_{ij}$ is constructed in 
\[O(pq)=O(|C_i| \cdot |D_j|).\] 
Edmonds' algorithm~\cite{kn:edmonds} 
computes a maximum matching in $B_{ij}$ in 
time 
\[O((p+q)^{5/2})=O((|C_i|+|D_j|)^{5/2}).\] 

\medskip 

\noindent
Summing over the components $C_i$ and $D_j$, for 
$i \in \{1,\dots,t\}$ and $j \in \{1,\dots,s\}$, we obtain 
\[\sum_{i=1}^t \sum_{j=1}^s |C_i|\cdot|D_j|+ (|C_i|+|D_j|)^{5/2} 
= O(|V(G)|^{5/2} \cdot |V(H)|^{5/2}).\] 

\medskip 

\noindent
This proves the claim. 
\qed\end{proof}

\section{NP-completeness of retracts in cographs}
\label{section NP-c}

Recall that a graph $G$ is perfect when 
$\omega(G^{\prime})=\chi(G^{\prime})$ for every 
induced subgraph $G^{\prime}$ 
of $G$. 
By the perfect graph theorem a graph is perfect if and only if 
it has no odd hole or odd antihole~\cite{kn:chudnovsky}. 
This implies that cographs are perfect.  
Perfect graphs are recognizable in polynomial 
time~\cite{kn:chudnovsky2}. 
For a graph $G$, when 
$\omega(G)=\chi(G)$ 
one can compute this value in polynomial 
time via Lov\'asz theta function~\cite{kn:grotschel}. 

The following lemma appears, eg, in~\cite{kn:fomin}. 

\begin{lemma}
\label{lm 1}
Assume that $\omega(H)=\chi(H)$. There is a homomorphism 
$G \rightarrow H$ if and only if $\chi(G) \leq \omega(H)$. 
\end{lemma}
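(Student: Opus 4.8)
The plan is to prove both directions of the equivalence, with the forward direction being essentially trivial and the reverse direction carrying the content.

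For the forward direction, suppose there is a homomorphism $\phi: G \rightarrow H$. Since $H$ is perfect-like in the sense that $\omega(H)=\chi(H)$, and in general a homomorphism $G \rightarrow H$ composed with a proper $\chi(H)$-coloring of $H$ (equivalently, a homomorphism $H \rightarrow K_{\chi(H)}$) yields a homomorphism $G \rightarrow K_{\chi(H)}$, so $\chi(G) \leq \chi(H) = \omega(H)$. This uses only that homomorphisms compose and that $\chi(H) \leq k$ iff $H \rightarrow K_k$, facts already recorded in the excerpt.

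For the reverse direction, assume $\chi(G) \leq \omega(H)$. Let $k = \omega(H)$. Then $G \rightarrow K_k$ since $\chi(G) \leq k$. On the other hand, $\omega(H) \geq k$ gives $K_k \rightarrow H$ (indeed $\omega(H)=k$, so $H$ contains $K_k$ as a subgraph and the inclusion is a homomorphism). Composing the homomorphisms $G \rightarrow K_k \rightarrow H$ produces the desired homomorphism $G \rightarrow H$. Both displayed equivalences relating $\chi$, $\omega$ and homomorphisms to complete graphs are stated in the excerpt, so this is a direct chain.

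I do not expect a serious obstacle here: the whole statement is a two-line application of the identities $G \rightarrow K_k \Leftrightarrow \chi(G) \leq k$ and $K_k \rightarrow G \Leftrightarrow \omega(G) \geq k$ together with composition of homomorphisms, with the hypothesis $\omega(H)=\chi(H)$ used only to make $\chi(H)$ and $\omega(H)$ interchangeable so that the ``middle'' complete graph $K_k$ can simultaneously receive a map from $G$ and send a map into $H$. The one point worth stating carefully is that perfection of $H$ is not actually needed in full --- only the single equation $\omega(H)=\chi(H)$ --- which is why the lemma is stated in that generality.
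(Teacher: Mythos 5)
Your proof is correct and follows essentially the same route as the paper: the forward direction composes $G \rightarrow H$ with an $\omega(H)$-coloring of $H$ (using $\chi(H)=\omega(H)$), and the reverse direction chains $G \rightarrow K_k \rightarrow H$ via the two displayed equivalences. The only cosmetic difference is your choice of $k=\omega(H)$ rather than the paper's $k=\chi(G)$ as the intermediate clique, which changes nothing.
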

\begin{proof}
Write $\omega=\omega(H)=\chi(H)$. 
First assume that there is a homomorphism $\phi: G \rightarrow H$. 
There is a homomorphism $f: H \rightarrow K_{\omega}$ since 
$H$ is $\omega$-colorable. 
Then $f \circ \phi: G \rightarrow K_{\omega}$ 
is a homomorphism, and so $G$ has a $\omega$-coloring. This 
implies that $\chi(G) \leq \omega$. 

\medskip 

\noindent 
Assume $\chi(G) \leq \omega$. 
There is a homomorphism $G \rightarrow K_k$, where $k = \chi(G)$. 
Since $K_k$ is an induced subgraph of $H$, there is also 
a homomorphism $K_k \rightarrow H$. 
This implies that $G$ is homomorphic 
to $H$, ie $G \rightarrow H$. 
\qed\end{proof}

\begin{corollary}
When $G$ and $H$ are perfect one can check in polynomial time 
whether there is a homomorphism $G \rightarrow H$. 
\end{corollary}

\bigskip 

It is well-known 
that retracts, like general homomorphisms, 
constitute a transitive relation. We 
provide a short proof for completeness sake. 

\begin{lemma}
Let $A$ be a retract of $G$ and let $B$ be a retract of $A$. 
Then $B$ is a retract of $G$. 
\end{lemma}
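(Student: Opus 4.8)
The plan is to argue directly from the definitions by composing the two retraction/co-retraction pairs. By hypothesis there are homomorphisms $\rho_1 : G \rightarrow A$ and $\gamma_1 : A \rightarrow G$ with $\rho_1 \circ \gamma_1 = id_A$, and homomorphisms $\rho_2 : A \rightarrow B$ and $\gamma_2 : B \rightarrow A$ with $\rho_2 \circ \gamma_2 = id_B$. I would set $\rho := \rho_2 \circ \rho_1 : G \rightarrow B$ and $\gamma := \gamma_1 \circ \gamma_2 : B \rightarrow G$ and show that $(\rho,\gamma)$ witnesses that $B$ is a retract of $G$.

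The first step is to note that $\rho$ and $\gamma$ are indeed homomorphisms. This is immediate: if $\phi : X \rightarrow Y$ and $\psi : Y \rightarrow Z$ both satisfy the edge-preservation condition~\eqref{eqn1}, then for any edge $\{x,y\} \in E(X)$ we have $\{\phi(x),\phi(y)\} \in E(Y)$ and hence $\{\psi(\phi(x)),\psi(\phi(y))\} \in E(Z)$, so $\psi \circ \phi$ is a homomorphism. Applying this twice gives that both $\rho$ and $\gamma$ are homomorphisms with the stated domains and codomains.

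The second step is the identity computation. Using associativity of composition of maps,
\[
\rho \circ \gamma \;=\; (\rho_2 \circ \rho_1) \circ (\gamma_1 \circ \gamma_2)
\;=\; \rho_2 \circ (\rho_1 \circ \gamma_1) \circ \gamma_2
\;=\; \rho_2 \circ id_A \circ \gamma_2
\;=\; \rho_2 \circ \gamma_2
\;=\; id_B,
\]
so $(\rho,\gamma)$ satisfies the requirement in the definition of a retract, and $B$ is a retract of $G$.

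There is essentially no obstacle here: the statement is a formal consequence of the fact that the composite of two homomorphisms is a homomorphism together with associativity of composition. The only point one must be mildly careful about is the order of composition in the definition of retract (namely that $\rho \circ \gamma = id$ means ``first apply the co-retraction, then the retraction''), and keeping the domains and codomains of the two pairs aligned so that $\rho_1 \circ \gamma_1 = id_A$ can be cancelled in the middle of the composite.
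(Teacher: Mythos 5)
Your proof is correct and follows essentially the same route as the paper's: compose the two retraction/co-retraction pairs, observe that composites of homomorphisms are homomorphisms, and cancel $\rho_1 \circ \gamma_1 = id_A$ in the middle by associativity. The only difference is that you spell out the (standard) fact that composition preserves the edge-preservation condition, which the paper takes for granted.
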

\begin{proof}
Let $\rho_1$ and $\gamma_1$ be a retraction and 
co-retraction from $G$ to $A$ 
and let $\rho_2$ and $\gamma_2$ be a retraction and 
co-retraction from $A$ to $B$. 
Since all four maps $\rho_1$, $\rho_2$, $\gamma_1$ 
and $\gamma_2$ are homomorphisms, the following two maps 
are homomorphisms as well.  
\begin{equation}
\rho_2 \circ \rho_1: G \rightarrow B \quad\text{and}\quad 
\gamma_1 \circ \gamma_2: B \rightarrow G. 
\end{equation} 
Furthermore, 
\begin{equation}
(\rho_2 \circ \rho_1) \circ (\gamma_1 \circ \gamma_2) = 
\rho_2 \circ id_A \circ \gamma_2 = \rho_2 \circ \gamma_2 = id_B.
\end{equation}
This proves that $B$ is a retract of $G$. 
\qed\end{proof}

\bigskip 

Throughout the remainder 
of this section it is assumed that $G$ and $H$ 
are cographs. Note that, using the cotree, 
$\omega(G)$ and $\chi(G)$ can be 
computed in linear time when $G$ is a cograph.  

\bigskip 

\begin{lemma}
\label{lm disc}
Assume $H$ is disconnected, with components $H_1,\dots,H_t$. 
Assume that $H$ is a retract of a graph $G$. 
Then there is an ordering 
of the components of $G$, say $G_1,\dots,G_s$ such that 
\begin{enumerate}[\rm (a)]
\item $s \geq t$, and 
\item $G_i$ retracts to $H_i$, for every $i\in\{1,\dots,t\}$, and 
\item for every $j\in \{t+1,\dots,s\}$, 
there is a homomorphism $G_j \rightarrow H$. 
\end{enumerate}
\end{lemma}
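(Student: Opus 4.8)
The plan is to analyze the retraction $\rho:G\to H$ and co-retraction $\gamma:H\to G$ restricted to components. First I would observe that since $\gamma$ is a homomorphism, it maps each connected component $H_i$ into a single connected component of $G$; call it $G_{\sigma(i)}$. I claim $\sigma$ is injective: if $\gamma(H_i)$ and $\gamma(H_j)$ lay in the same component $G_k$ for $i\neq j$, then $\rho\circ\gamma=id_H$ would force $\rho$ to send the connected set $G_k$ onto vertices of both $H_i$ and $H_j$, which is impossible since $\rho(G_k)$ is connected and hence lies in a single component of $H$. Since $H$ has $t$ components and $\sigma$ is injective, $G$ has at least $t$ components, giving (a); reorder the components of $G$ so that $G_i=G_{\sigma(i)}$ for $i\in\{1,\dots,t\}$, i.e. $\gamma(H_i)\subseteq G_i$.

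Next I would verify (b), that $G_i$ retracts to $H_i$ for each $i\le t$. The map $\gamma$ restricted to $H_i$ gives a homomorphism $\gamma_i:H_i\to G_i$ (its image lies in $G_i$ by construction). For the retraction direction, note $\rho$ maps $G_i$ into some single component of $H$; since $\rho(\gamma(H_i))=H_i$, that component must be $H_i$, so $\rho$ restricts to a homomorphism $\rho_i:G_i\to H_i$. Finally $\rho_i\circ\gamma_i = (\rho\circ\gamma)|_{V(H_i)} = id_{H_i}$, so $H_i$ is a retract of $G_i$.

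For (c), each remaining component $G_j$ with $j>t$ is mapped by $\rho$ into a single component of $H$, in particular $\rho|_{V(G_j)}$ is a homomorphism $G_j\to H$, which is exactly what is required.

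I do not expect a serious obstacle here; the only point needing care is the injectivity argument for $\sigma$, where one must use connectedness of homomorphic images in both directions together with $\rho\circ\gamma=id_H$ — everything else is a routine restriction-of-maps verification. (One may also wish to remark that (b) and (c) are not only necessary but sufficient for $H$ to be a retract of $G$, by taking the disjoint union of the $\rho_i$, $\gamma_i$ and arbitrary homomorphisms $G_j\to H$, though the statement as phrased only asserts the forward direction.)
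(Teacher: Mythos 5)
Your proof is correct and takes essentially the same route as the paper's: both rest on the fact that the homomorphic image of a connected graph is connected, and then match components of $H$ to components of $G$ via $\gamma$ and $\rho$. Your version is in fact more explicit than the paper's (which simply asserts ``we may assume that $H_1,\dots,H_t$ are induced subgraphs of components $G_1,\dots,G_t$''); the injectivity argument for $\sigma$ and the restriction checks you carry out are precisely the details the paper leaves implicit.
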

\begin{proof}
No connected graph has a disconnected retract since the homomorphic 
image of a connected graph is connected. To see that, notice 
that a homomorphism 
$\phi: G \rightarrow H$ is a vertex coloring of $G$, 
where the vertices of $H$ represent colors. By that we mean 
that, for each $v \in V(H)$, 
the pre-image $\phi^{-1}(v)$ is an independent set in $G$ or 
$\es$. 
One obtains the image $\phi(G)$ 
by identifying vertices in $G$ that receive the same color. 
When $G$ is connected, this `quotient graph' on the color classes 
is  
also connected, which is easy to prove by means of contradiction.  

\medskip 

\noindent   
Assume that $G$ retracts to $H$. 
Then we may assume that $H_1,\dots,H_t$ are induced subgraphs 
of components $G_1,\dots,G_t$ of $G$ and that  
each $G_i$ retracts to $H_i$. 
For the remaining components $G_j$, where $j > t$, there is 
then a 
homomorphisms $G_j \rightarrow H$. 

\medskip 

\noindent
Notice that, for $j > t$, 
we can check if there is a homomorphism 
$G_j \rightarrow H$ by checking if $G_j \oplus H_k$ retracts 
to $H_k$, for some $1 \leq k \leq t$ (see, eg,~\cite{kn:vikas}), 
or, equivalently (since cographs are perfect), if 
$\omega(G_j) \leq \omega(H_k)$ for some $1 \leq k \leq t$. 
\qed\end{proof}

\bigskip 

\begin{remark}
\label{remark disconnected}
Assume that we are given, for each pair $G_i$ and $H_j$ whether 
$G_i$ retracts to $H_j$ or not. Then, to check if $G$ retracts 
to $H$, we may consider a bipartite 
graph $B$ defined as follows (see eg~\cite{kn:chung,kn:reyner}). 
One color class of $B$ 
has the components of $G$ 
as vertices and the other color class has the components of $H$ as vertices. 
There is an edge between $G_i$ and $H_j$ whenever $G_i$ retracts to $H_j$. 
To check if $G$ retracts to $H$, we can let 
an algorithm compute a maximum 
matching in $B$. 
There is a retraction only if the matching exhausts all components  
of $H$ {\em and\/} if $\omega(G)=\omega(H)$. 
\end{remark}

\bigskip 

A cocomponent of a graph $G$ is a subset of vertices 
which 
induces a component of the complement $\Bar{G}$. 

\begin{lemma}
\label{lm c}
Assume $G$ is connected and assume that $G$ retracts  
to $H$. Then $H$ is also connected. Let $G_1,\dots,G_t$ be the 
subgraphs of $G$ induced by the cocomponents of $G$. Then there is a 
partition of the cocomponents of $H$ such that the subgraphs of $H$ 
induced by the parts of the partition, can be ordered $H_1,\dots,H_t$ 
such that $G_i$ retracts to $H_i$ for $i \in \{1,\dots,t\}$. 
\end{lemma}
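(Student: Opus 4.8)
The plan is to dualize the argument already used in Lemma~\ref{lm disc}, exploiting the fact that cographs are closed under complementation and that a graph is connected precisely when its complement is disconnected (for graphs on at least two vertices). First I would observe that if $G$ retracts to $H$ then $\Bar{G}$ retracts to $\Bar{H}$: indeed, a retraction $\rho:G\to H$ together with a co-retraction $\gamma:H\to G$ need \emph{not} give homomorphisms of the complements in general, so this is exactly the step that needs care. The right way around this is to note that for cographs the retraction can be taken to respect the cotree structure, or, more cleanly, to argue directly on the join structure: since $G$ is connected, $\Bar{G}$ is disconnected with components exactly the cocomponents $G_1,\dots,G_t$, and since a retract is an induced subgraph with $\omega(G)=\omega(H)$ and $\chi(G)=\chi(H)$, and $\chi$ of a join is the sum of the $\chi$'s, the coloring/homomorphism must interact with the cocomponents in a controlled way.

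Concretely, the key steps in order are: (1) Show $H$ is connected. If $H$ were disconnected with components $H_1,\dots,H_s$, then $\chi(H)=\max_i\chi(H_i)$, whereas $\chi(G)=\sum_i\chi(G_i)$ over the cocomponents; a retraction is surjective up to isomorphism on an induced copy of $H$ and preserves $\chi$, and one checks that the co-retraction $\gamma$ would have to embed the disconnected $H$ into the connected $G$ in a way forcing a drop in chromatic number, a contradiction. (Alternatively, invoke that a connected graph has no disconnected retract only after complementing; the cleanest route is: $G$ connected $\Rightarrow$ $\Bar G$ disconnected; if $H$ disconnected then $\Bar H$ connected, and one shows a ``co-retraction'' relationship in the complement that contradicts Lemma~\ref{lm disc} applied to $\Bar G \to \Bar H$.) (2) Having $H$ connected, pass to complements: $\Bar G$ is disconnected with components $G_1,\dots,G_t$ and $\Bar H$ is disconnected with components the cocomponents of $H$. (3) Apply Lemma~\ref{lm disc} to the retraction $\Bar G \to \Bar H$: this yields, after reordering, an assignment of each cocomponent of $\Bar H$ to a distinct $G_i$ with $G_i$ (in the complement) retracting to it, and the leftover cocomponents of $\Bar H$ each admitting a homomorphism into $\Bar H$ — but in the complement world, bundling together the cocomponents of $\Bar H$ that land in the same $G_i$ exactly produces the desired partition of the cocomponents of $H$ into blocks $H_1,\dots,H_t$ with $G_i$ retracting to $H_i$. (4) Complement back: each ``retracts to'' relation in $\Bar{\phantom{x}}$ transfers to a ``retracts to'' relation between $G_i$ and $H_j$, since within a single cotree level complementation turns a union into a join and retractions respect this.

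The main obstacle I anticipate is step~(1) together with the precise justification that $G \to H$ implies $\Bar G \to \Bar H$ for cographs — this is \emph{false} for general graphs, so the proof must genuinely use the cograph structure (e.g.\ that $\rho$ and $\gamma$ can be chosen to map cocomponents to cocomponents, which itself should be extracted from the $P_4$-freeness / cotree decomposition, perhaps by an inductive argument on the cotree parallel to the one in Theorem~\ref{thm th}). Once the complementation transfer is established, steps (2)--(4) are a direct translation of Lemma~\ref{lm disc}, and the ``partition of the cocomponents of $H$'' in the statement is precisely the fibers of the component-to-cocomponent assignment produced by the matching argument there. I would therefore spend the bulk of the write-up nailing down the complementation step and the connectedness of $H$, and treat the rest as a dual restatement of the disconnected case.
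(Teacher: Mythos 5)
Your plan hinges on the claim that for cographs a retraction $G\to H$ yields a retraction $\Bar{G}\to\Bar{H}$, and this claim is false. Take $G=P_3$ with vertices $a,b,c$ and edges $ab,bc$ (a connected cograph) and $H=K_2$ on $\{a,b\}$; mapping $c\mapsto a$ is a retraction of $G$ onto $H$. But $\Bar{G}$ is the edge $ac$ plus the isolated vertex $b$, while $\Bar{H}=2K_1$ has no edges, so there is no homomorphism $\Bar{G}\to\Bar{H}$ at all, let alone a retraction. The obstruction you yourself flagged --- that $\rho$ identifies nonadjacent vertices, so it cannot be edge-preserving on complements --- is not a technicality to be ``nailed down''; it kills the approach. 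There is also a structural mismatch even at the combinatorial level: Lemma~\ref{lm disc} assigns each component of $H$ injectively to its own component of $G$, whereas Lemma~\ref{lm c} must allow several cocomponents of $H$ to be bundled into the single part assigned to one cocomponent of $G$ (e.g.\ $G=(K_2\oplus K_1)\otimes K_1$ retracts to $K_3$, whose three cocomponents are partitioned into the parts $\{a,b\}$ and $\{d\}$ over the two cocomponents of $G$). So a dual restatement of Lemma~\ref{lm disc} would not even have the right shape of conclusion.

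The paper's proof is direct and does not complement anything. Since $\rho\circ\gamma=id_H$, the map $\rho$ is surjective onto $V(H)$, and the homomorphic image of the connected graph $G$ is connected, so $H$ is connected. For the partition, look at the sets $\rho(V(G_i))$: for $x\in G_i$ and $y\in G_j$ with $i\neq j$ we have $xy\in E(G)$, hence $\rho(x)\rho(y)\in E(H)$; in particular $\rho(x)\neq\rho(y)$. Thus the images $\rho(V(G_1)),\dots,\rho(V(G_t))$ are pairwise disjoint, pairwise completely joined, and cover $V(H)$. No edge of $\Bar{H}$ runs between two different images, so each cocomponent of $H$ lies inside a single image; the images therefore induce the desired partition of the cocomponents of $H$. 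Finally, $\rho^{-1}(\rho(V(G_i)))=V(G_i)$ by disjointness, so $\gamma$ maps $H_i:=H[\rho(V(G_i))]$ into $G_i$ and the restrictions of $\rho$ and $\gamma$ witness that $G_i$ retracts to $H_i$. If you want to salvage your write-up, replace the complementation step by this image-of-cocomponents argument; the rest of your discussion then becomes unnecessary.
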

\begin{proof}
Every subgraph $G_i$ of $G$, induced by a cocomponent,  
retracts to some induced subgraph. These retracts 
are pairwise joined, so each part is the join of 
some subgraphs induced by cocomponents of $H$.  
Thus the parts of $V(H)$ that are the images of the subgraphs 
induced by cocomponents of $G$ 
form a partition of the cocomponents of $H$. 
\qed\end{proof}

\bigskip 

\begin{figure}[htb]
\begin{center}
\includegraphics[scale=.5]{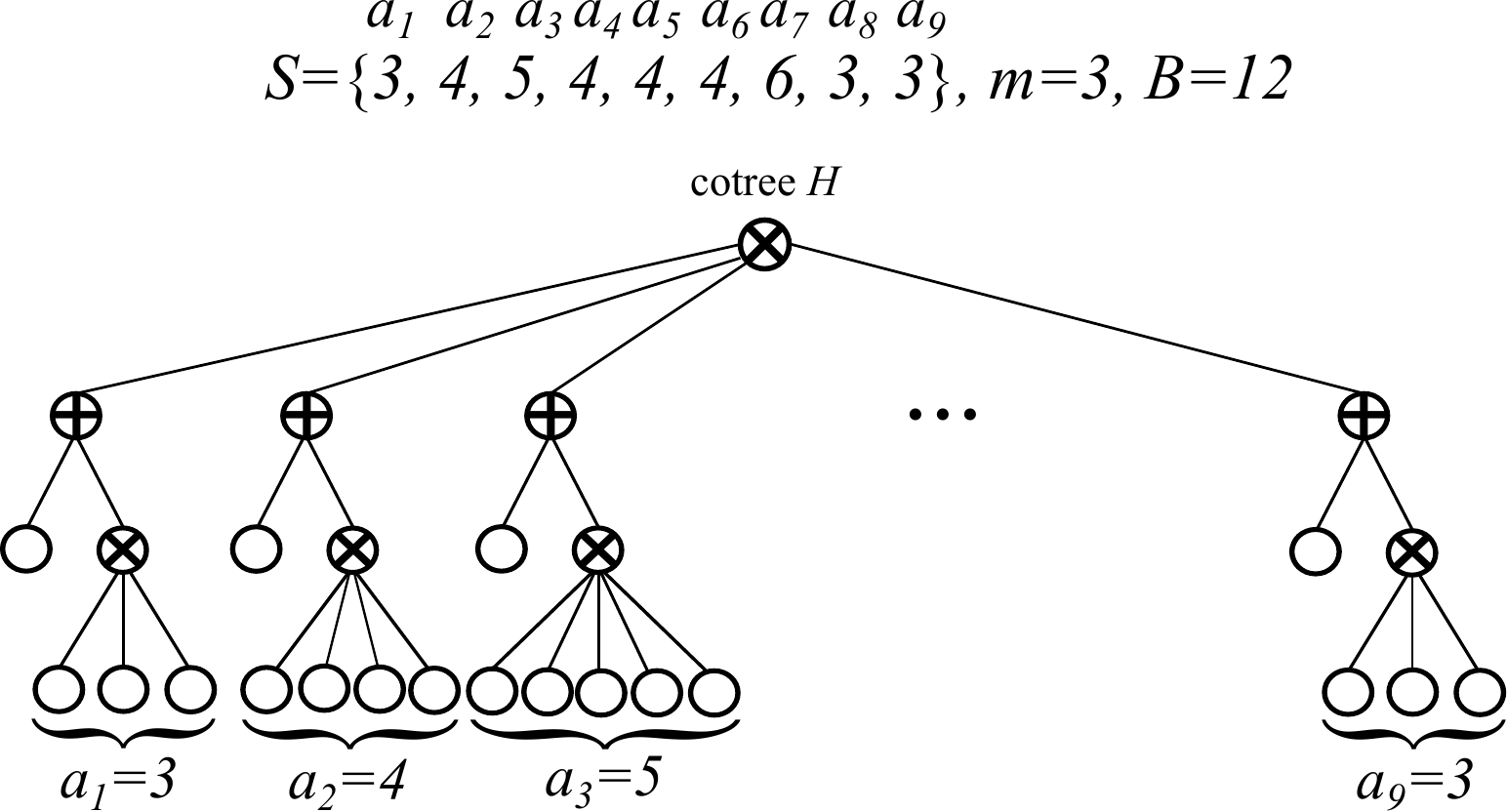}
\includegraphics[scale=.5]{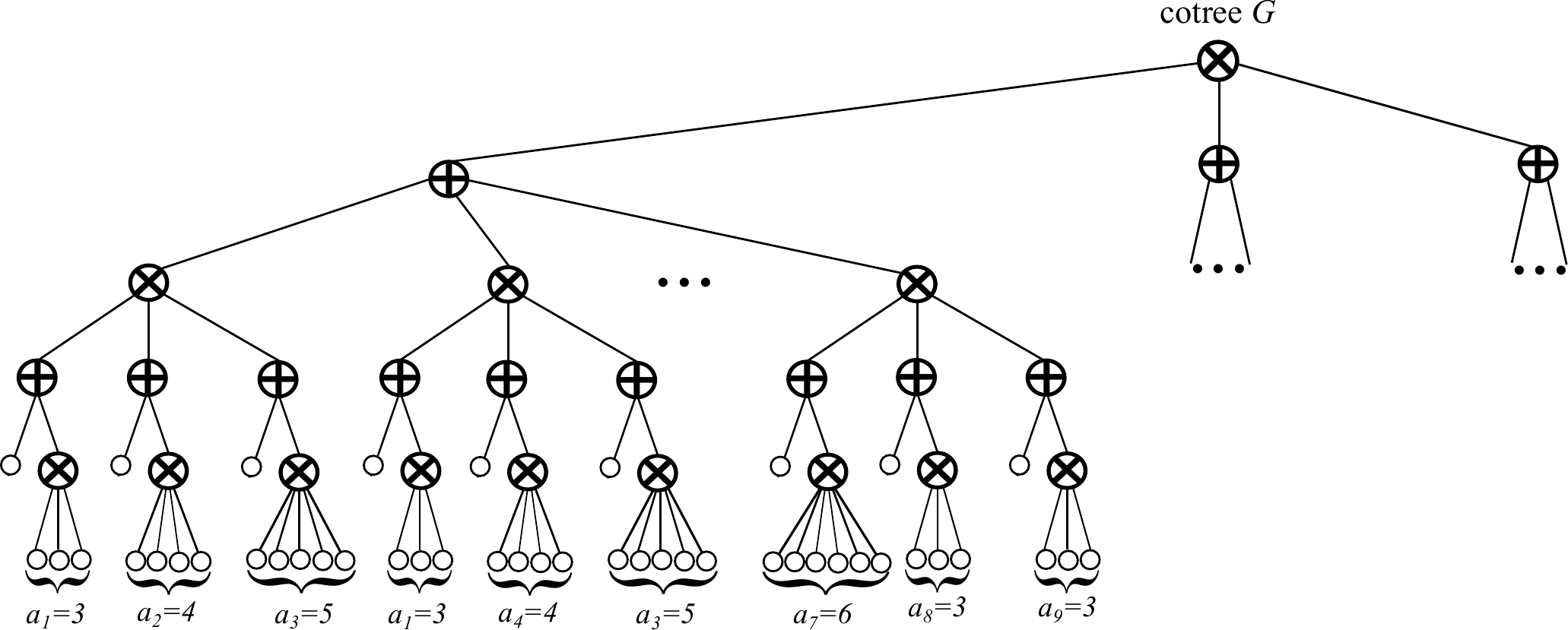}
\end{center}
\caption{The cotrees for $G$ and $H$ used in 
the proof of Theorem~\ref{thm NP-c}.}
\end{figure}

\begin{theorem}
\label{thm NP-c}
Let $G$ and $H$ be cographs. The problem to decide 
whether $H$ is a retract of $G$ is NP-complete. 
\end{theorem}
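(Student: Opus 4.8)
The plan is to prove membership in NP first, then hardness by a reduction. Membership is easy: a certificate consists of the retraction $\rho:V(G)\to V(H)$ together with the co-retraction $\gamma:V(H)\to V(G)$; one checks in polynomial time that both are homomorphisms (condition~(\ref{eqn1})) and that $\rho\circ\gamma=id_H$. So the work is all in the NP-hardness reduction.

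For hardness, I would reduce from a known NP-complete problem that naturally sits inside the cograph world. The figure caption already tells us the intended construction uses specially built cotrees for $G$ and $H$, so the reduction is gadget-based. The natural candidate to reduce from is a packing/partition-type problem — something like \textsc{3-Partition} or \textsc{Bin Packing}, or possibly \textsc{Clique} via complementation — because Lemma~\ref{lm disc} and Remark~\ref{remark disconnected} have just set up the machinery showing that when $H$ is disconnected, retraction amounts to a bipartite matching on components together with the constraint that leftover components of $G$ must each map homomorphically into some component of $H$ (equivalently, by perfection, a clique-number inequality $\omega(G_j)\le\omega(H_k)$). Mirrored through complementation (Lemma~\ref{lm c}), the same holds for cocomponents of a connected cograph. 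So I would design $H$ to be a disjoint union of ``bin'' gadgets (cliques, or joins of small pieces, of prescribed sizes), and $G$ to be a disjoint union of ``item'' gadgets, arranged so that a retraction $G\to H$ exists iff the items can be packed into the bins respecting capacities; the co-retraction requirement forces the matching to be onto the components of $H$, which pins down that every bin is used and prevents degenerate collapses. The alternating $\oplus/\otimes$ structure of cotrees means I would likely need one extra level of join above the union to control $\omega(G)=\omega(H)$ (required by the remark after Definition of retract), so the top of each cotree is a $\otimes$ node joining the packing gadget to a fixed clique that equalizes clique numbers.

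The key steps, in order, are: (1) write down membership in NP as above; (2) fix the source problem (I expect \textsc{3-Partition} or a close relative, since its numbers are polynomially bounded and it forces an exact partition, which matches the ``matching must exhaust all components of $H$'' condition); (3) describe the gadgets — for each item $a_i$ a component $G_i$ whose relevant parameter (size, or clique number after complementation) encodes $a_i$, and for each bin a component $H_j$ whose parameter encodes the target sum $B$; (4) add the equalizing clique via a top $\otimes$ node so that $\omega(G)=\omega(H)$ automatically and so the extra constraint in Remark~\ref{remark disconnected} is satisfied exactly when the packing works; (5) prove correctness in both directions — a valid packing yields explicit $\rho$ and $\gamma$, and conversely any retraction, by Lemmas~\ref{lm disc} and~\ref{lm c}, decomposes into a component/cocomponent matching that must be a perfect assignment of items to bins, from which a valid partition is read off; (6) check that $G$ and $H$ are genuinely $P_4$-free (immediate, since everything is built from unions and joins of cliques) and that the construction is polynomial in the input size (this is where using \textsc{3-Partition} with unary-encoded numbers, or strong NP-completeness, matters).

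The main obstacle I anticipate is the two-directional correctness proof of the gadget, specifically ruling out ``cheating'' retractions: because homomorphisms can collapse vertices, one must verify that no retraction can, say, fold several item-components into one bin in a way that violates the capacity but still satisfies $\rho\circ\gamma=id_H$ and preserves clique numbers. This is exactly what the co-retraction $\gamma$ and the clique-equalizing gadget are there to prevent, and pinning that down — showing the only surviving retractions are the ``honest'' packings — is the delicate part. A secondary subtlety is making sure the bipartite matching in Remark~\ref{remark disconnected} is forced to be a \emph{partition} (each item used once, each bin filled exactly) rather than merely a matching plus slack absorbed by the homomorphism condition; designing the gadget sizes so that the slack option is never available (e.g.\ by making every item-component too large to be ``extra'' relative to any single bin-component) is the trick I would rely on.
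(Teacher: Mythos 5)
Your NP-membership certificate and your guess of \textsc{3-Partition} as the source problem both match the paper, but the architecture you commit to for the reduction has a genuine gap. You place the $3m$ item gadgets as \emph{components} of $G$ (under a union) and the $m$ bin gadgets as components of $H$, and hope the matching framework of Lemma~\ref{lm disc} and Remark~\ref{remark disconnected} encodes a packing. It cannot: in that framework each component of $H$ must be matched to a \emph{single} component of $G$ that retracts onto it, and every unmatched component of $G$ merely needs a homomorphism into some component of $H$, which for cographs is just the inequality $\omega(G_j)\le\omega(H_k)$. No mechanism lets several items jointly ``fill'' one bin, and leftover components consume no capacity. The fix you propose for your own ``secondary subtlety'' --- making every item too large to be absorbed as a leftover --- only forces the matching to be a bijection between items and bins, which still does not encode an additive partition. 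That subtlety is in fact the fatal obstruction, and a top $\otimes$-node joining everything to an equalizing clique does not repair it.

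The missing idea is that the grouping must happen on the $H$ side, under a \emph{join}, where clique numbers are additive. In the paper $H$ is connected: its cotree root is a $\otimes$-node whose $3m$ cocomponents are the item gadgets $K_1\oplus K_{a_i}$, so $\omega(H)=\sum_i a_i=mB$. The graph $G$ is the join of $m$ identical cocomponents, each the disjoint union of one ``triple gadget'' $(K_1\oplus K_{a_i})\otimes(K_1\oplus K_{a_j})\otimes(K_1\oplus K_{a_k})$ for every triple with $a_i+a_j+a_k=B$; each such cocomponent has clique number $B$. By Lemma~\ref{lm c} a retraction $G\to H$ induces a partition of the $3m$ item cocomponents of $H$ into $m$ parts, one per cocomponent of $G$; since the clique number of a part is the sum of its $a_i$'s, is at most $B$, and the $m$ parts must sum to $mB$, every part sums to exactly $B$, and the normalization $B/4<a_i<B/2$ forces each part to be a triple realized as one of the children. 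This join-additivity is precisely the capacity mechanism your union-based construction lacks, and without it the two-directional correctness you rightly identify as the delicate part cannot be completed.
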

\begin{proof}
We reduce the 3-partition problem to the retract problem on cotrees. 
The 3-partition problem is the following. Let 
$m$ and $B$ be integers. Let $S$ be a multiset 
of $3m$ positive integers, $a_1,\dots,a_{3m}$. Determine if there is a 
partition of $S$ into $m$ subsets $S_1,\dots,S_m$, such that the 
sum of the numbers in each subset is $B$. Without loss of generality 
we assume that each number is strictly between $B/4$ and $B/2$, 
which guarantees that in a solution each subset contains exactly three 
numbers that add up to $B$.

\medskip 

\noindent
The 3-partition problem is strongly NP-complete, that is, the problem 
remains NP-complete when all the numbers in the input are 
represented in unary~\cite{kn:garey}. 

\medskip 

\noindent
In our reduction, the cotree for the graph $H$ has a 
root which is labeled as a join-node $\otimes$. The root 
has $3m$ children, one for each number $a_i$. For simplicity 
we refer to the children as $a_i$, $i \in \{1,\dots,3m\}$. 
Each child $a_i$ has a union node $\oplus$ as the root. 
The root of each $a_i$-child has two children, 
one is a single leaf and the 
other is a join-node $\otimes$ with $a_i$ leaves. 
This ends the description of $H$. 

\medskip 

\noindent
The cotree for the graph $G$ has a join-node $\otimes$ 
as a root and this has $m$ children. The idea is that each 
child corresponds with one set of a 3-partition of $S$. 
The subtrees for all the children are identical. 
It has a union-node $\oplus$ as the root. 
Consider all triples $\{i,j,k\}$ for which $a_i+a_j+a_k=B$. 
For each such triple create one child, which is the join of 
three cotrees, one for $a_i$, one for $a_j$ and one for 
$a_k$ in the triple. The subtree for $a_i$ is a union of 
two subtrees.  As in the cotree for the pattern $H$,  
one subtree is a single leaf, and the other subtree is the 
join of $a_i$ leaves. The other two subtrees, for the numbers 
$a_j$ and $a_k$ in the triple are similar. 

\medskip 

\noindent
Let $T_H$ and $T_G$ be the cotrees
for $H$ and $G$ as constructed above. 
Say $T_H$ and $T_G$ have roots $r_H$ and $r_G$.   
When the graph $H$ is a retract of $G$ then the $a_i$-children of 
$r_H$ are partitioned into triples, such there is a 
bijection between these triples, 
say $\{a_i,a_j,a_k\}$ and a branch in the cotree of $G$. 
Each $\oplus$-node which is the root of a child of $r_G$ 
must have 
exactly one $\{a_i,a_j,a_k\}$-child 
that corresponds with the triple. Notice that, 
by the construction, all subgraphs induced by remaining components 
of the $\oplus$-node  
have maximal cliques of size $B$. Therefore, all other 
children of the $\oplus$-node 
are homomorphic to the one child which corresponds to 
the triple $\{a_i,a_j,a_k\}$. 

\medskip 

\noindent
It now follows from Lemma~\ref{lm c} 
that there is a 3-partition if and only if 
the graph $H$ is a retract of $G$. This completes the proof.  
\qed\end{proof}

\section{The partitioned case for retracts in cographs}
\label{section partitioned}

This section is dedicated to a special 
case, the `partitioned case.' It deals with the subproblem 
where $H$ is given as an induced subgraph of $G$. 

\begin{theorem}
Let $G$ and $H$ be cographs and assume that $H$ is given as 
an induced subgraph of $G$. By that we mean that we can test in 
$O(1)$ time whether a given vertex of $G$ is in $H$ or not. 
There exists a linear-time algorithm to test if $G$ retracts 
to $H$. 
\end{theorem}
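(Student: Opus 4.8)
The plan is to recurse on the cotrees $T_G$ and $T_H$, exploiting the fact that, by the boxed observation in the proof of Theorem~\ref{thm th} and the analysis in Lemma~\ref{lm disc}, Lemma~\ref{lm c} and Remark~\ref{remark disconnected}, a retraction is forced to respect the recursive structure of the decomposition. Since $H$ is given as an induced subgraph of $G$, I can simultaneously walk down both cotrees. The key structural fact is that if $G$ is connected (root of $T_G$ is a $\otimes$-node), then $H$ must be connected and every universal vertex of $G$ that lies in $H$ can be assumed to be fixed by the retraction, and — crucially, in the partitioned case — every universal vertex of $H$ \emph{is} a universal vertex of $G$ (because adding vertices to $G$ can only destroy universality), so matching universal vertices is unambiguous: $\rho$ and $\gamma$ both restrict to the identity on the common universal vertices, and we recurse on $G$ minus its universal vertices and $H$ minus its (same) universal vertices. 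Dually, when $G$ is disconnected we look at components; a component $H_i$ of $H$ lies inside a unique component $G_j$ of $G$ (since $H \subseteq G$ as an induced subgraph), so the "matching" of components is \emph{given to us}, not something we must search for — we only need to (i) check that each such $G_j$ retracts onto $H_i$, recursively, and (ii) check that every remaining component $G_\ell$ admits a homomorphism into $H$, which by Lemma~\ref{lm disc} (cographs are perfect) amounts to the numeric test $\omega(G_\ell) \le \max_i \omega(H_i)$.

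Concretely, I would define a recursive predicate $\mathrm{retr}(u,v)$ on a node $u$ of $T_G$ and a node $v$ of $T_H$, meaning "the cograph at $u$ retracts onto the cograph at $v$", under the promise that $V(H_v) \subseteq V(G_u)$. The base case is a single leaf on each side. If both $u$ and $v$ are $\otimes$-nodes, let $U_G$, $U_H$ be the sets of universal vertices (leaves that are children of $u$ carrying the join, plus those inherited — easier: compute, in linear time from the cotree, the set of universal vertices of each node's graph); verify $U_H \subseteq U_G$ and $|U_G| \ge |U_H|$, then recurse on $(u', v')$ where $u'$, $v'$ are the subtrees obtained by deleting the universal vertices (these are $\oplus$-rooted, or empty). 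Also handle the clique corner case $\omega(G_u) = \omega(H_v)$ when $H_v$ is a clique, exactly as in Theorem~\ref{thm tp}(ii). If both are $\oplus$-nodes, then for each component-subtree $v_i$ of $v$ find the unique component-subtree $u_{j(i)}$ of $u$ containing it (via the leaf-to-vertex bijection and the known embedding), recurse on each $(u_{j(i)}, v_i)$, and for every leftover component $u_\ell$ check $\omega(G_{u_\ell}) \le \max_i \omega(H_{v_i})$. The mismatched-label cases (one side connected, the other not) are impossible or trivial by Lemma~\ref{lm disc} and Lemma~\ref{lm c}: a connected $G$ cannot retract onto a disconnected $H$, and a disconnected $G$ can retract onto a connected $H$ only if exactly one component of $G$ carries the retract and the rest map homomorphically, which is again the $\omega$-test.

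For the running time, I would argue that the recursion visits each node of $T_H$ at most once as the "$v$" argument, and at each such call the work charged to $G$ is proportional to the size of the relevant piece of $T_G$ that has not been charged before — the deleted universal vertices and the components that fail to contain any piece of $H$ get charged once and discarded. Precomputing, in linear time, for every node of $T_G$ its $\omega$ value, its $\chi$ value, its set of universal vertices, and (via a single DFS) which component/cocomponent it belongs to, together with the map from $V(H)$ into the component structure of $G$, makes every individual test $O(1)$ amortized. Summing gives $O(|V(G)| + |V(H)|)$ total.

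The main obstacle I anticipate is the bookkeeping in the disconnected ($\oplus$–$\oplus$) case: one must be careful that the embedding $V(H) \hookrightarrow V(G)$ really does force each component of $H$ into a \emph{single} component of $G$ (true, since $H$ is an \emph{induced} subgraph, so a path between two vertices of $H_i$ exists already inside $H$ and hence inside $G$), and that the leftover components of $G$ are correctly handled by a single global numeric comparison rather than a matching — which is what distinguishes this partitioned case from the NP-complete general case of Theorem~\ref{thm NP-c}, where the component-to-component assignment was the source of hardness. Establishing cleanly that no search is needed here — that both the universal-vertex matching and the component matching are uniquely determined by the given embedding — is the crux of the linear-time claim.
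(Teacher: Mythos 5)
There is a genuine gap, and it sits exactly where you identified the crux: the handling of join nodes. Your $\otimes$--$\otimes$ case is modelled on the trivially perfect situation of Theorem~\ref{thm tp}, where a connected graph is (universal vertices) joined to a disconnected remainder; for a general connected cograph this fails in two ways. First, the recursion does not terminate correctly: after deleting the universal vertices of $G_u$ the remainder need not be $\oplus$-rooted (take $G=K_{2,2,2}$, which has no universal vertices and stays connected), so your case analysis simply does not cover join nodes with several non-singleton children. Second, both stated conditions are wrong. The claim that every universal vertex of $H$ is universal in $G$ is false (and your own parenthetical --- ``adding vertices to $G$ can only destroy universality'' --- argues the opposite); the correct inclusion is $U_G\cap V(H)\subseteq U_H$. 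Likewise the test $|U_G|\ge|U_H|$ is reversed relative to condition (i) of Theorem~\ref{thm tp}, which requires $H$ to have at least as many universal vertices as $G$. Concretely, let $G=K_{2,2,2}$ with parts $\{a,a'\},\{b,b'\},\{c,c'\}$ and $H=G-c'$. Then $H$ is a retract of $G$ (send $c'\mapsto c$), $H$ is not a clique, yet $U_H=\{c\}\not\subseteq U_G=\varnothing$ and $|U_G|<|U_H|$, so your algorithm rejects a yes-instance and the clique corner case does not rescue it.

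The fix, and the observation your proposal is missing, is that in the partitioned case the join node is also forced, but \emph{cocomponent-wise} rather than via universal vertices: since the retraction fixes $V(H)$ pointwise and the images of distinct cocomponents of $G$ are disjoint and pairwise joined (Lemma~\ref{lm c}), every cocomponent $C_i$ of a connected $G$ must satisfy $V(H)\cap C_i\neq\varnothing$ and $G[C_i]$ must retract onto $H[V(H)\cap C_i]$. With that, all the nontrivial work migrates to the union nodes, where your analysis (the component matching is given by the embedding; leftover $H$-free components are disposed of by a clique-number comparison with a surviving sibling, using perfection via Lemma~\ref{lm 1}) is sound and matches the paper. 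Indeed the paper's algorithm is exactly this collapsed to a single pruning pass over $T_G$: repeatedly delete children of $\oplus$-nodes that contain no $H$-leaves and whose clique number is dominated by a sibling's, and accept iff only $H$-vertices survive. Your timing argument and your treatment of the $\oplus$--$\oplus$ and mixed-label cases are fine; the join-node case needs to be rewritten as above before the proof stands.
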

\begin{proof}
We describe the algorithm. Construct a cotree for the graph $G$. 
Repeatedly, remove children of $\oplus$-nodes for which 
\begin{enumerate}[\rm (a)]
\item the 
branch has no leaves corresponding with vertices in $H$, and 
\item the subgraph induced by the branch has 
clique number at most the clique number of a sibling. 
\end{enumerate}
When the algorithm ends such that all remaining vertices 
are in $H$ then $G$ retracts to $H$ and otherwise it does 
not. 

\medskip 

\noindent
It is easy to check the correctness of this algorithm. Also, it 
is not difficult to see that it can be implemented to run in 
linear time in the size of $G$. 
\qed\end{proof}
 
\section{A fixed-parameter solution for retracts in cographs}
\label{section FPT}

In this section we look at a parameterized solution for the 
retract problem. Let $G$ and $H$ be cographs. We consider the 
parameterization by the number of vertices in $H$. Let 
\[k=|V(H)|.\] 

\bigskip 

\begin{lemma}
\label{lm bound join}
When $H$ is a retract of $G$ then $\omega(G) =\omega(H) \leq k$. 
Let $T_G$ be a cotree for $G$. Then 
every join-node in $T_G$ has at most $k$ children, and the 
height of the cotree is $O(k)$. 
\end{lemma}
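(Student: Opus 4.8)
The plan is to establish the three assertions in order, exploiting the fact that homomorphisms preserve cliques in both directions for a retract pair.

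First I would prove $\omega(G)=\omega(H)\le k$. Since $H$ is a retract of $G$, there are homomorphisms $\rho:G\to H$ and $\gamma:H\to G$. A homomorphism maps a clique to a clique, so $K_\omega\to G$ together with $\rho$ gives $K_{\omega(G)}\to H$, whence $\omega(G)\le\omega(H)$; symmetrically $\gamma$ gives $\omega(H)\le\omega(G)$. Thus $\omega(G)=\omega(H)$, and since $H$ has only $k$ vertices, $\omega(H)\le k$. (Equivalently, one may invoke the earlier remark that $K_j\to G$ iff $\omega(G)\ge j$.)

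Next I would bound the number of children of a join-node. Let $u$ be a join-node of $T_G$ with children inducing subgraphs $A_1,\dots,A_r$, so the subgraph of $G$ induced by the leaves below $u$ is the join $A_1\otimes\cdots\otimes A_r$. Each $A_i$ is nonempty, so $\omega(A_1\otimes\cdots\otimes A_r)=\sum_{i=1}^r\omega(A_i)\ge r$. Since this is a clique number of an induced subgraph of $G$, it is at most $\omega(G)=\omega(H)\le k$, giving $r\le k$. The same argument of course applies to the root if the root is a join-node.

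Finally I would bound the height. In the cotree the labels alternate between $\oplus$ and $\otimes$ along any root-to-leaf path (by the Remark on cotrees). Consider such a path $P$ and pair up consecutive nodes into $\otimes$-$\oplus$ (or $\oplus$-$\otimes$) blocks. At each $\otimes$-node on $P$, the clique number of the subgraph hanging below it is strictly larger than the clique number below the child of that $\otimes$-node that lies on $P$, because the other siblings contribute at least $1$ each to the clique number of the join and there is at least one other sibling (an internal node has at least two children). Hence the clique numbers of the subgraphs rooted at the $\otimes$-nodes along $P$, read from the leaf upward, form a strictly increasing sequence bounded by $\omega(G)\le k$; so $P$ contains at most $k$ join-nodes, and by alternation at most $2k+1$ nodes in total. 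Therefore the height of $T_G$ is $O(k)$. The one point that requires a little care — and the mild obstacle — is justifying that each internal $\otimes$-node really has a sibling subtree contributing positively to the clique number, i.e. that internal nodes have degree at least two and that we may assume $T_G$ has no redundant unary nodes; this follows from the standard normal form of a cotree.
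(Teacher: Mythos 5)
Your proposal is correct and follows essentially the same route as the paper: both derive $\omega(G)=\omega(H)\le k$ from the two homomorphisms, bound the number of children of a join-node by noting each child contributes at least one vertex to a clique, and bound the height via the alternation of labels together with the fact that each join-node on a root-to-leaf path strictly increases the clique number. Your extra care about internal nodes having at least two children is a standard cotree normal-form assumption that the paper leaves implicit.
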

\begin{proof}
Let $H$ be a retract of $G$. 
By definition of a retract 
$\omega(G)= \omega(H) \leq k$.  

\medskip 

\noindent
Let $T_G$ be a cotree.  
Every $\otimes$-node $p$ in $T_G$ has at most $k$ children, 
since every child adds at least one vertex to a clique 
in the subgraph represented by $p$. 

\medskip 

\noindent
The labels on the internal nodes on a path from the root 
to the leaves alternate between $\oplus$ and $\otimes$. 
Therefore, if the 
height would be more than $2k+1$ then the path contains 
more than $k$ join nodes. Each of these join nodes adds 
at least one to a clique in $G$, which is a contradiction. 
\qed\end{proof}

\begin{theorem}
The retract problem, which asks if a cograph $H$ is 
a retract of $G$, is fixed-parameter tractable 
when parameterized by the number of vertices in $H$. 
\end{theorem}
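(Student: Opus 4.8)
The plan is to give a branching/bounded-search-tree algorithm on the cotree $T_G$ of $G$, exploiting the structural bounds from Lemma~\ref{lm bound join}. Since $k=|V(H)|$ and $\omega(G)=\omega(H)\le k$ whenever $H$ is a retract, every join-node of $T_G$ has at most $k$ children and the height of $T_G$ is $O(k)$; moreover $H$ itself has only $O(2^k)$ possible shapes up to isomorphism, and each cotree decomposition of $H$ has $O(k)$ nodes. First I would reduce, using Lemma~\ref{lm disc} and Lemma~\ref{lm c}, to the case that both $G$ and $H$ are connected: in the disconnected case one matches components of $G$ to components of $H$ via a bipartite matching (Remark~\ref{remark disconnected}), and each such matching test is a smaller instance of the same problem, with the extra components only needing a homomorphism test, which is a clique-number comparison (polynomial by Lemma~\ref{lm 1}). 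Symmetrically, when $G$ is connected, Lemma~\ref{lm c} tells us the cocomponents of $H$ are partitioned and each block is a retract of a distinct cocomponent $G_i$ of $G$; since there are at most $\omega(H)\le k$ cocomponents of $H$ and at most $k$ cocomponents of $G$ that can receive nonempty blocks, the number of ways to distribute cocomponents of $H$ among the $G_i$ is bounded by a function of $k$.

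Next I would recurse through the two alternating levels of the cotrees simultaneously. At a join-node of $T_H$ with children $B_1,\dots,B_p$ (so $p\le k$) and the corresponding join-node of $T_G$ with children $G_1,\dots,G_q$ (so $q\le k$), we guess an assignment of each $B_i$ to some $G_{j(i)}$ so that distinct $B_i$ go to distinct $G_j$ (this is the partition of cocomponents from Lemma~\ref{lm c}); there are at most $q^p\le k^k$ such guesses. For each $G_j$ not in the image we must verify a homomorphism $G_j\to H$, again a clique-number check. For each pair $(B_i,G_{j(i)})$ we recurse one level down to a union-node versus union-node instance: here, following the argument in Theorems~\ref{thm th} and~\ref{thm tp}, one greedily matches the universal vertices (or rather, the shared join structure is already consumed), and the children of the $\oplus$-nodes are again matched by the disconnected-case bipartite matching, with unmatched children of $G$'s $\oplus$-node only required to map homomorphically into the union $H'$, i.e.\ $\omega \le$ something. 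Because the height of $T_G$ is $O(k)$ and $|V(H)|=k$, the recursion depth is $O(k)$ and at each node the branching factor is bounded by a function of $k$ (roughly $k^{O(k)}$), while the non-branching work (clique-number comparisons, constructing and solving the bipartite matchings on at most $O(k)$ plus $n$ vertices) is polynomial in $n$. Multiplying gives a running time of $f(k)\cdot n^{O(1)}$, which is the definition of fixed-parameter tractable.

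The main obstacle I expect is the bookkeeping at the union-nodes: when $G$ is connected and we descend past the universal vertices, the subgraph $G-\{g_1,\dots,g_k\}$ can be large and disconnected with arbitrarily many components, and we must simultaneously (i) match a bounded number of these components onto the $O(k)$ relevant components of $H-\{h_1,\dots,h_k\}$ as genuine retracts, via a maximum matching in a bipartite graph whose edges are themselves recursively-determined retract relations, and (ii) certify that every leftover component of $G$ admits a homomorphism into $H$, which by Lemma~\ref{lm 1} (cographs are perfect) is exactly the condition $\omega(G_j)\le\omega(H_{\ell})$ for some $\ell$. The care needed is to ensure that the recursion only ever branches on structures of size bounded in $k$ — the components of $H$ and the bounded-size pieces they can match to — while treating the unboundedly-many small components of $G$ collectively through the perfect-graph homomorphism shortcut rather than by branching. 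One also has to argue that a retract of $G$ onto $H$, if it exists, is always witnessed by one of the guessed assignments; this follows by induction using Lemmas~\ref{lm disc} and~\ref{lm c} to peel off one alternation level at a time, exactly as in the NP-completeness reduction but run in the forward direction. Once that invariant is established, the FPT bound is immediate from the $O(k)$ depth and the $k^{O(k)}$ per-node branching.
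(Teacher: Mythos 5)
Your proposal follows essentially the same route as the paper: bound the width and height of $T_G$ via Lemma~\ref{lm bound join}, branch over the at most $k^k$ assignments of the children of a join-node of $T_H$ to the children of the corresponding join-node of $T_G$ (justified by Lemma~\ref{lm c}), handle union-nodes by a bipartite matching on components with leftover components of $G$ dispatched by the clique-number test of Lemma~\ref{lm 1}, and multiply the $k^{O(k)}$ branching over the $O(k)$ levels to get $f(k)\cdot n^{O(1)}$. One local slip: at the join-level you require that distinct children $B_i$ of $T_H$ go to \emph{distinct} children $G_j$ of $T_G$, but Lemma~\ref{lm c} gives a \emph{partition} of the cocomponents of $H$ indexed by the cocomponents of $G$, so the correct constraint is surjectivity (every $G_j$ receives at least one $B_i$), not injectivity --- for example $K_3$ has three cocomponents while the butterfly $K_1\otimes(K_2\oplus K_2)$ has only two, yet $K_3$ is a retract of it, so the injective version would wrongly reject. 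Since your stated count $q^p\le k^k$ already enumerates all assignments and you state the partition semantics correctly in your first paragraph, this is a one-line fix and does not affect the FPT bound.
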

\begin{proof}
Consider cotrees $T_G$ and $T_H$ for $G$ and $H$ and let 
$r_G$ and $r_H$ be the roots of the two cotrees. Assume both 
roots are $\otimes$-nodes. Then both have at most $k$ children. 
According to Lemma~\ref{lm c}, when $H$ is a 
retract of $G$ there is a partition $\mathcal{P}$ 
of the lines incident 
with $r_H$ such that each child of $r_G$ represents a graph 
that retracts to the subgraph of $H$ 
induced by exactly one part of the partition.  
Let $p$ be the number of children of $r_G$ and let 
$q$ be the number of children of $r_H$. The number of 
partitions of a $q$-set into $p$ nonempty parts is given by 
the Stirling number of the second kind. A trivial upperbound 
for the number of different assignments of the children of 
$r_H$ to the children of $r_G$ is 
\[p^q \leq k^k,\] 
since, by Lemma~\ref{lm bound join}, $p \leq k$ and $q \leq k$. 

\medskip 

\noindent
Our algorithm tries all possible partitions of the children 
of $r_H$. Consider a partition $\mathcal{P}$, and let 
$P_i \in \mathcal{P}$ be mapped to the $i^{\mathrm{th}}$ child 
of $r_G$. Let $H_i$ be the subgraph of $H$ induced by $P_i$ 
and let $G_i$ be the cocomponent of $G$ induced by the $i^{\mathrm{th}}$ 
child of $r_G$. We proceed as in the proof of Theorem~\ref{thm tp}. 
Let $C_i^1,\dots,C_i^a$ be the components of the root of the 
$i^{\mathrm{th}}$ child of $r_G$. Let $D_i^1,\dots,D_i^b$ be the 
components of $H_i$. Consider the bipartite graph with vertices 
the components of $G_i$ and $H_i$, where an edge 
$(C_i^{\alpha},D_i^{\beta})$ indicates that $C_i^{\alpha}$ 
retracts to $D_i^{\beta}$. The algorithm checks if there 
is a matching that exhausts all components of $H_i$, and it 
checks if the remaining components of $G_i$ are homomorphic to 
some component of $H_i$. 

\medskip 

\noindent
Since the height of the cotree is bounded by $k$, it follows 
that this algorithm can be implemented to run in 
$O(k^{k^2} \cdot (k\cdot|V(G)|)^{5/2})$. This proves the theorem.  
\qed\end{proof}

\section{Foldings}
\label{section folding}

\begin{definition}
Let $G=(V,E)$ be a graph and let $x$ and $y$ be two 
vertices in $G$ that are at distance two. A simple fold with 
respect to $x$ and $y$ is the operation which identifies $x$ 
and $y$. A folding is a homomorphisms which is a sequence 
of simple folds. 
\end{definition}
When $G \rightarrow H$ is a folding then we say that $G$ folds 
onto $H$. 

It is well-known that any retraction is a folding, 
see eg~\cite[Proposition~2.19]{kn:hahn}. 

\begin{definition}
The folding number $\Sigma(G)$ 
of a connected graph $G$ is the largest number $s$ 
such that $G$ folds onto $K_s$. When $G$ is disconnected 
the folding number is the maximal folding number of the 
graphs induced by the components of $G$.  
\end{definition}

Recall that the achromatic number $\Psi(G)$ of a graph $G$ 
is the largest number of colors with which one can properly color 
the vertices of $G$ such that for any two colors there are two 
adjacent vertices that have those colors. 

\begin{lemma}
\label{lm fold}
Assume that $G$ has a universal vertex $u$. Then 
\[\Sigma(G)=1+\Psi(G-u)=\Psi(G).\]
\end{lemma}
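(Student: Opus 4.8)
The plan is to prove the two equalities separately, the second being the easier one. For $\Sigma(G)=1+\Psi(G-u)$, I would argue both inequalities. For the direction $\Sigma(G)\ge 1+\Psi(G-u)$: start with an achromatic coloring of $G-u$ using $\Psi(G-u)$ colors; since $u$ is universal, every color class of $G-u$ together with the singleton $\{u\}$ forms a proper coloring of $G$ into $1+\Psi(G-u)$ classes, and the achromatic condition guarantees that between any two classes there is an edge (for two classes not containing $u$ this is inherited; for a class and the class $\{u\}$ it holds because $u$ is adjacent to everything). A proper coloring all of whose pairs of classes are joined by an edge is exactly a surjective homomorphism onto $K_s$ in which no color class can be merged; I would then invoke the standard fact (cited in the paper as \cite[Proposition~2.19]{kn:hahn}, together with the remark that a retraction is a folding) that such a homomorphism onto a complete graph is realizable by a folding, or argue directly that identifying same-colored vertices can be carried out one distance-two pair at a time. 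Hence $G$ folds onto $K_{1+\Psi(G-u)}$.

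For the reverse inequality $\Sigma(G)\le 1+\Psi(G-u)$: suppose $G$ folds onto $K_s$. The composite homomorphism $G\to K_s$ partitions $V(G)$ into $s$ independent sets with all $\binom{s}{2}$ pairs joined by edges. The vertex $u$ lies in one class, say the $s$-th; because $u$ is universal its class is exactly $\{u\}$ (any other vertex in that class would be adjacent to $u$, contradicting independence). Restricting the coloring to $G-u$ gives a proper coloring of $G-u$ with $s-1$ colors in which every pair of color classes is still joined by an edge — the edges witnessing pairs among the first $s-1$ classes do not involve $u$. That is precisely an achromatic coloring of $G-u$ with $s-1$ colors, so $s-1\le\Psi(G-u)$, i.e. $s\le 1+\Psi(G-u)$. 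Taking $s=\Sigma(G)$ finishes this direction.

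For the equality $\Psi(G)=1+\Psi(G-u)$: the inequality $\Psi(G)\ge 1+\Psi(G-u)$ follows by the same construction as above — extend an achromatic coloring of $G-u$ by giving $u$ its own fresh color. For $\Psi(G)\le 1+\Psi(G-u)$, take an achromatic coloring of $G$ with $\Psi(G)$ colors; the color class containing $u$ must be $\{u\}$ since $u$ is universal, and deleting it leaves an achromatic coloring of $G-u$ with $\Psi(G)-1$ colors (all pairs among the surviving classes are still realized by edges avoiding $u$), so $\Psi(G)-1\le\Psi(G-u)$.

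The only genuinely nonroutine point is the first direction of the folding equality, namely that an "achromatic-type" coloring — a proper surjection onto $K_s$ in which every pair of classes is joined by an edge — can actually be implemented as a \emph{folding} (a sequence of simple folds through distance-two pairs), rather than merely as an arbitrary homomorphism onto $K_s$. This is where I would lean on the cited fact that retractions are foldings: the surjective homomorphism $G\to K_s$ coming from such a coloring has $K_s$ sitting inside $G$ as a clique hit by every class, so it is in fact a retraction onto that copy of $K_s$, and retractions are foldings. Everything else is the bookkeeping that a universal vertex is forced into a singleton color class, which makes the two equalities fall out in parallel.
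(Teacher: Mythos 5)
Your bookkeeping around the universal vertex and both directions of $\Psi(G)=1+\Psi(G-u)$ is fine, and your upper bound $\Sigma(G)\le 1+\Psi(G-u)$ (a folding onto $K_s$ induces a complete proper $s$-coloring whose class containing $u$ is $\{u\}$) is correct. The gap is in the one step you yourself flag as nonroutine, and the justification you say you would ``lean on'' is wrong. A complete proper coloring of $G$ with $s$ classes does \emph{not} in general give a retraction onto a copy of $K_s$ sitting inside $G$: that would require a clique of size $s$ meeting every class, hence $\omega(G)\ge s$, whereas $\Sigma(G)$ and $\Psi(G)$ can strictly exceed $\omega(G)$ even for graphs with a universal vertex. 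For instance $G=K_1 \otimes P_4$ has $\omega(G)=3$ but $\Psi(G)=\Sigma(G)=4$, so the homomorphism onto $K_4$ cannot be a retraction. Likewise, the ``standard fact'' that a vertex- and edge-surjective homomorphism onto a complete graph is realizable by a folding is false in general: the paper itself observes, right after this lemma, that the folding number of a tree is at most two, while the achromatic number of trees is unbounded.

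The repair is exactly your parenthetical alternative, made precise using the universal vertex --- and this is the paper's own argument: since $u$ is universal, any two nonadjacent vertices of $G$ are at distance two via $u$; the class $\{u\}$ survives every fold as a vertex adjacent to everything else in the quotient, so at every stage two distinct nonadjacent classes are still at distance two. Hence each color class of the complete coloring can be collapsed one pair at a time by simple folds, and the final quotient is $K_s$ because every pair of classes is joined by an edge of $G$. With that substitution your proof goes through. For comparison, the paper's proof is shorter: it makes the same distance-two observation to conclude that any achromatic coloring of $G$ is realizable as a folding (giving $\Sigma(G)=\Psi(G)$), and then cites the Harary--Hedetniemi join formula $\Psi(G_1\otimes G_2)=\Psi(G_1)+\Psi(G_2)$ to get $\Psi(G)=1+\Psi(G-u)$; you reprove that special case of the join formula from scratch, which is harmless.
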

\begin{proof}
Any two nonadjacent vertices of $G-u$ are at distance two 
in $G$. Thus any achromatic coloring of $G$ is a folding. 
The universal vertex must be in a color class by itself. 
Harary and Hedetniemi~\cite{kn:harary} show that, when $G$ 
is the join of two graphs $G_1$ and $G_2$ then 
$\Psi(G)=\Psi(G_1)+\Psi(G_2)$. 
The proves the lemma. 
\qed\end{proof}

\bigskip 

Notice that the achromatic number problem is NP-complete,  
even for trees. However, the problem is 
fixed-parameter tractable~\cite{kn:farber,kn:manlove,kn:mate}. 
The image of a tree after a simple fold is 
a tree. Therefore, the folding number of a tree is at most two. 

\begin{theorem}
The problem to compute the folding number is NP-complete, 
even when restricted to trivially perfect graphs. 
\end{theorem}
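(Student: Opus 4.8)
The plan is to reduce from the achromatic number problem on trees, which is NP-complete. Given a tree $T$ for which we wish to decide whether $\Psi(T)\geq s$, I would construct a trivially perfect graph $G$ by adding a universal vertex $u$ to $T$; that is, $G=T\otimes K_1$. Since $T$ has no induced $C_4$ or $P_4$ (it is a tree, hence triangle-free and chordal, and in particular $C_4$-free and $P_4$-free only if it actually contains no $P_4$)—wait, this needs care: a path on five vertices is a tree but contains an induced $P_4$. So $G=T\otimes K_1$ is \emph{not} in general trivially perfect. I therefore need a different gadget.

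\textbf{Revised plan.} Instead, reduce directly from achromatic number on trees by using the recursive structure of trivially perfect graphs together with Lemma~\ref{lm fold}. The cleanest route: recall that deciding $\Psi(T)\geq s$ for a tree $T$ is NP-complete, and the achromatic number of a \emph{disjoint union} of trees behaves additively-with-merging in a way that is awkward; so instead I would encode a tree instance into a trivially perfect graph whose universal-vertex peeling mirrors the tree. Concretely, build $G$ as follows: take the tree $T$, and for the reduction we want a trivially perfect graph $G^\ast$ with $\Sigma(G^\ast)$ controlled by $\Psi$ of a tree-like object. Using Lemma~\ref{lm fold}, if $G^\ast = G_0 \otimes K_1$ with $G_0$ trivially perfect, then $\Sigma(G^\ast)=1+\Psi(G_0)$. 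So it suffices to show that computing $\Psi$ is NP-complete on (disconnected) trivially perfect graphs $G_0$, equivalently on disjoint unions of \emph{stars} and small trees that are themselves trivially perfect (a star $K_{1,t}$ is trivially perfect). Since achromatic number is known NP-complete already for forests of bounded structure, one pushes this down to forests of stars, which are exactly the disconnected trivially perfect graphs of depth two; I would cite or adapt the tree-hardness construction of~\cite{kn:mate,kn:farber} to make the trees used be forests of stars (each star is $C_4$- and $P_4$-free), then prepend a universal vertex.

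\textbf{Key steps, in order.} (1) State that $\Psi$ is NP-complete on forests of stars — either by citing a version of the tree-hardness result in which the trees are stars, or by giving a short reduction from a hard case to this one (e.g. from graph achromatic number or from the tree case, splitting long paths). (2) Observe a forest of stars is trivially perfect, since each component has a universal vertex and the class is closed under disjoint union by the recursive characterization after Definition (cographs) and the $\{C_4,P_4\}$-free characterization. (3) Given such a forest $F$ with target $s$, form $G = F \otimes K_1$ by adding one universal vertex $u$; this $G$ is connected and $\{C_4,P_4\}$-free, hence trivially perfect by the Wolk characterization. (4) Apply Lemma~\ref{lm fold}: $\Sigma(G) = 1 + \Psi(G-u) = 1+\Psi(F)$. (5) Conclude $\Sigma(G)\geq s+1 \iff \Psi(F)\geq s$, giving the reduction; NP-membership is clear since a folding sequence onto $K_{\Sigma(G)}$ is a polynomial certificate (it is just an achromatic coloring, by Lemma~\ref{lm fold} and the fact that every component has a universal vertex).

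\textbf{Main obstacle.} The crux is step (1): verifying that the known NP-hardness of achromatic number for trees survives when the trees are restricted to forests of stars (depth-two trivially perfect graphs). A path $P_5$ is a tree but not trivially perfect, so the standard tree reductions cannot be used verbatim; I expect to need either a citation to a strengthened form of the M\"ate / Farber–et–al.\ result, or a short self-contained reduction — plausibly from the achromatic number of an arbitrary graph $H$: replace each edge of $H$ by a pendant path of length two through a fresh degree-two vertex, or more simply reduce from a problem like \textsc{Set Packing}/\textsc{Exact Cover} directly into a forest of stars whose achromatic colorings correspond to packings. Getting this gadget right, and checking the $\{C_4,P_4\}$-freeness of the construction at every stage, is the part that requires genuine work; the rest is a two-line application of Lemma~\ref{lm fold}.
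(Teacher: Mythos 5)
Your overall skeleton is exactly right and matches the paper's argument: start from NP-hardness of the achromatic number on some subclass, add a universal vertex (under which trivially perfect graphs are closed), and apply Lemma~\ref{lm fold} to transfer hardness to the folding number. Steps (2)--(5) of your outline are fine.

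The problem is that step (1) is not a technicality you can defer --- it is the entire content of the hardness proof, and as stated it is a genuine gap. You correctly notice that the known tree constructions cannot be used verbatim (a $P_5$ already kills trivial perfection), but your fallback, NP-hardness of the achromatic number on forests of stars, is neither cited from an actual source nor proved; you only sketch candidate gadgets (``replace each edge by a pendant path of length two'', ``reduce from Set Packing'') without verifying that any of them works, and the pendant-path idea immediately reintroduces induced $P_4$'s. The paper avoids this entirely by invoking Bodlaender's result that the achromatic number is already NP-complete when restricted to trivially perfect graphs (his reduction for cographs/interval graphs yields such graphs), so no new hardness construction is needed: one just adds a universal vertex and applies Lemma~\ref{lm fold}. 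To repair your proof you must either supply that citation (or an equivalent one) for a class of $\{C_4,P_4\}$-free graphs, or actually carry out and verify a reduction into star forests --- and the latter is a nontrivial open task within your write-up, not the ``genuine work'' footnote you have left it as.
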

\begin{proof}
Bodlaender shows in~\cite{kn:bodlaender} that computing the 
achromatic number is NP-complete, even when restricted 
to trivially perfect graphs. 
Since the class of trivially perfect graphs is closed under 
adding universal vertices, by Lemma~\ref{lm fold} computing the 
folding number is NP-complete for trivially perfect graphs. 
\qed\end{proof}

\begin{theorem}
When $G$ is a threshold graph then 
\[\chi(G)=\Sigma(G)=\Psi(G).\]
\end{theorem}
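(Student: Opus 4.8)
The plan is to show the two equalities $\chi(G)=\Sigma(G)$ and $\Sigma(G)=\Psi(G)$ for a threshold graph $G$ by induction on $|V(G)|$, exploiting the recursive structure that a threshold graph either has an isolated vertex or a universal vertex. Since $\chi$ and $\Sigma$ are both defined componentwise (for $\Sigma$ this is explicit in the definition, and $\chi$ of a disconnected graph is the max over components), and a disconnected threshold graph has at most one nontrivial component (no induced $2K_2$), I can immediately reduce to the connected case: deleting isolated vertices changes none of the three quantities, so assume $G$ is connected with at least two vertices.

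In the connected case $G$ has a universal vertex $u$, and $G-u$ is again a threshold graph. By Lemma~\ref{lm fold} we have $\Sigma(G)=1+\Psi(G-u)=\Psi(G)$, which already gives the second equality for free; so the real work is $\chi(G)=\Sigma(G)$. Here I would use the same lemma together with the fact (also invoked in the proof of Lemma~\ref{lm fold}, via Harary and Hedetniemi) that the chromatic number is additive over joins: $\chi(G)=\chi(\{u\})+\chi(G-u)=1+\chi(G-u)$, since $G$ is the join of $K_1$ with $G-u$. Combined with $\Sigma(G)=1+\Psi(G-u)$, it suffices to prove $\chi(G-u)=\Psi(G-u)$. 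But $G-u$ is a strictly smaller threshold graph, so this follows by the induction hypothesis, provided I have also folded the $\Psi$ statement into the induction. The base case $|V(G)|=1$ is trivial: $\chi=\Sigma=\Psi=1$.

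To make the induction clean I would actually state the inductive claim as the conjunction $\chi(G)=\Sigma(G)=\Psi(G)$ for all threshold graphs $G$, handle the disconnected/isolated-vertex reduction first, then in the connected case peel off a universal vertex $u$ and write
\[
\chi(G)=1+\chi(G-u)\stackrel{\text{(IH)}}{=}1+\Psi(G-u)=\Sigma(G)=\Psi(G),
\]
where the middle-right equalities are Lemma~\ref{lm fold}. The one point needing a word of justification is $\chi(G)=1+\chi(G-u)$ for a universal vertex $u$: this is immediate because $u$ is adjacent to everything, so it needs its own colour class and the rest can reuse an optimal colouring of $G-u$; equivalently it is the join formula $\chi(G_1 * G_2)=\chi(G_1)+\chi(G_2)$ with $G_1=K_1$.

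The main obstacle is essentially bookkeeping rather than mathematical depth: one must be careful that the componentwise definition of $\Sigma$ matches up with $\chi$ and $\Psi$ on disconnected threshold graphs, and that stripping isolated vertices (which a disconnected threshold graph may have in abundance) genuinely leaves all three invariants unchanged — true because isolated vertices can all share one colour class in any colouring/achromatic colouring of a graph that already has an edge, and folding never needs them. Once that reduction is in place, the universal-vertex step and Lemma~\ref{lm fold} do all the remaining work, so I expect the proof to be short.
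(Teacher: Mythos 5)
Your proof is correct and follows essentially the same route as the paper: induction on the threshold structure (peel off universal/isolated vertices), using the Harary--Hedetniemi join formula and the isolated-vertex recurrence $\Psi(G)=\max\{1,\Psi(G-x)\}$, with Lemma~\ref{lm fold} supplying $\Sigma(G)=\Psi(G)$ in the connected case. The paper's version is just terser, leaving the parallel recurrence for $\chi$ and the appeal to Lemma~\ref{lm fold} implicit, whereas you spell these out.
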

\begin{proof}
When $G$ is the join of two graph $G_1$ and $G_2$ then 
\[\Psi(G)=\Psi(G_1)+\Psi(G_2).\] 
Assume that $G$ has an isolated vertex $x$. In any achromatic 
coloring, the vertex must have a color that is used by another 
vertex also. Therefore, 
\[\Psi(G)=\max \;\{\;1,\Psi(G-x)\;\}.\] 
This proves the theorem. 
\qed\end{proof}

\section{Absolute retracts for cographs}
\label{section absolute retracts}

\begin{definition}
Let $\mathcal{G}$ be a class of graphs.
A graph $H$ is an absolute retract for $\mathcal{G}$ if
$H$ is a retract of a graph $G \in \mathcal{G}$ whenever
$G$ is an isometric embedding of $H$ and $\chi(H)=\chi(G)$.
\end{definition}

Hell, in his PhD thesis, characterized absolute retracts for the
class of bipartite graphs as the retracts of components of categorical
products of paths~\cite{kn:hell}. Pesch and Poguntke characterized
absolute retracts of $k$-chromatic graphs~\cite{kn:pesch}.
Their characterization can be strengthened for the case of
bipartite graphs such that it leads to a polynomial
recognition algorithm for absolute retracts of
bipartite graphs~\cite{kn:bandelt}. Examples
of absolute retracts of bipartite graphs are the chordal
bipartite graphs~\cite{kn:golumbic2}.
Median graphs are exactly the absolute retracts of
hypercubes~\cite{kn:bandelt2}. For reasons of brevity we leave
out the mention of all results on reflexive graphs.

\bigskip 

To illustrate the
problem for cographs,
let $G$ be the butterfly, ie $G=K_1 \otimes (K_2 \oplus K_2)$, and
let $H$ be the paw, ie $H=K_1 \otimes (K_1 \oplus K_2)$.
The $H$ is an induced subgraph of $G$ and, obviously, $H$ has an
isometric embedding in $G$ and $\omega(G)=\omega(H)$.
But $H$ is not a retract of $G$ and so
$H$ is not an absolute retract for cographs.

\begin{theorem}
Let $H$ be a connected cograph. Then $H$ is an
absolute retract for the class of cographs if and only
if every vertex of $H$ is in a maximal clique
of cardinality $\omega(H)$.
\end{theorem}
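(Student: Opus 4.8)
The plan is to prove both directions by induction on the cotree of the connected cograph $H$, using the decomposition $H = H_1 \otimes \cdots \otimes H_r$ into its cocomponents (so $r \geq 2$ unless $H$ is a single vertex, which is trivial). The key structural observation I would establish first is that every vertex of $H$ lies in a maximal clique of size $\omega(H)$ if and only if the same holds within each cocomponent $H_i$ with respect to $\omega(H_i)$; this follows because a maximal clique of $H$ is exactly a join of maximal cliques, one from each $H_i$, and $\omega(H)=\sum_i \omega(H_i)$. Thus the clique-covering condition is inherited by cocomponents, which is what makes the induction go through.

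For the ``only if'' direction, I would argue contrapositively: suppose some vertex $v$ of $H$ lies in no maximal clique of size $\omega(H)$. Then I construct a cograph $G \supseteq H$ that isometrically embeds $H$ with $\omega(G)=\omega(H)$ but does not retract to $H$. The idea, generalizing the butterfly/paw example, is to take $G = H \otimes'$ something, or more precisely to attach to $v$ a gadget that raises the local clique number: replace $v$'s neighborhood structure so that $v$ becomes part of a $P_4$-free supergraph in which $v$ sits in a clique of size $\omega(H)$, yet any homomorphism $G \to H$ is forced to move $v$, destroying $\rho \circ \gamma = \mathrm{id}_H$. Concretely, since $v$ is missing some vertex $w$ from every $\omega(H)$-clique through it (in some cocomponent), I can add a new vertex $v'$ joined to $N(v) \cup \{v\}$ minus a suitable set, arranging $\omega$ to stay fixed while no retraction can fix $v$; verifying the isometric embedding and that $G$ is a cograph is routine. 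The hardest part of this direction is choosing the gadget so that it simultaneously (a) keeps $G$ a cograph, (b) keeps $\omega(G)=\omega(H)$, (c) preserves distances from $H$, and (d) provably blocks every retraction — I expect this case analysis on which cocomponent fails to be the main obstacle.

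For the ``if'' direction, assume every vertex of $H$ lies in an $\omega(H)$-clique, and let $G$ be any cograph isometrically embedding $H$ with $\chi(G)=\chi(H)$ (equivalently $\omega(G)=\omega(H)$, since cographs are perfect). I must produce a retraction $\rho: G \to H$. I would use the cotree of $G$: write $G = G_1 \otimes \cdots \otimes G_s$ into cocomponents if $G$ is ``complete-join-decomposable'', and match these against the cocomponents of $H$. The point is that since $H$ is isometrically embedded and $\omega$ is preserved, each cocomponent $G_j$ restricted to $V(H)$ contributes, and because every vertex of $H$ is in a maximum clique, there is ``enough room'' in $H$ to absorb the extra vertices of $G$: a vertex $g \in V(G)\setminus V(H)$ lying in cocomponent $G_j$ can be mapped into the corresponding $H_i$ using Lemma~\ref{lm 1} and Lemma~\ref{lm disc}, since the relevant local clique numbers are dominated by $\omega(H_i)$. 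I would build $\rho$ recursively: fix $\rho = \mathrm{id}$ on $V(H)$, and for each component of each relevant $\oplus$-node of $G$ not meeting $H$, use perfectness to send it homomorphically onto a sibling that does meet $H$; the clique-covering hypothesis guarantees such a sibling with large enough clique number always exists. Checking that the resulting map is a well-defined homomorphism and composes to the identity on $H$ is then a matter of assembling the pieces along the cotree — conceptually clean but requiring care that the isometry hypothesis is actually used to rule out the bad configurations (it forces $v$ and a non-neighbor to stay at distance exactly two, pinning down the image).
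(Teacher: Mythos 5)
Your route is essentially the paper's: reduce isometric embedding to induced subgraph, push the clique-covering condition into cocomponents at $\otimes$-nodes, translate it at $\oplus$-nodes into ``every nonempty trace of $H$ on a child attains the maximal local clique number,'' build the retraction by fixing $H$ pointwise and absorbing $H$-free siblings via Lemma~\ref{lm 1}, and prove necessity by adding a twin. So the architecture is sound. The one genuine hole is exactly the step you flag as the main obstacle, the necessity gadget, and it is worth pointing out that no case analysis and no ``minus a suitable set'' is needed: if $v$ lies in no clique of size $\omega(H)$, just add a \emph{full} true twin $v'$ of $v$, adjacent to precisely $N_H[v]$. Then $G$ is still a cograph; $H$ is an induced and hence isometric subgraph (connected cographs have diameter at most two, so the isometry hypothesis never does any work --- your closing remark about isometry ``pinning down the image'' is a red herring); and $\omega(G)=\omega(H)$ because every clique of $G$ through $v'$ becomes, after deleting $v'$ or swapping it for $v$, a clique of $H$ through $v$, which by hypothesis has size at most $\omega(H)-1$. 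No retraction $\rho$ exists: $\rho(v')$ must be adjacent in $H$ to every vertex of $N_G(v')=N_H[v]$, in particular to $v$, which places $\rho(v')$ inside $N_H(v)\subseteq N_G(v')$ and therefore requires $\rho(v')$ to be adjacent to itself. The paper instead locates an $\oplus$-node whose trace on $H$ has a child of strictly smaller clique number than some sibling and twins a vertex of a \emph{maximum} clique of that deficient child; both gadgets work, but the direct one above is shorter, so the ``hardest part'' of your plan is actually the easiest.

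On sufficiency, your sketch matches the paper's but two points need to be made explicit rather than asserted. First, dispose of disconnected $G$: every component not containing $H$ has clique number at most $\omega(H)$ and hence maps homomorphically into the one that does. Second, the claim that at an $\oplus$-node every component of $G$ meeting $H$ has its trace $D'_j$ satisfying $\omega(H[D'_j])=\omega'=\max_i\omega(G[C'_i])$ is where the hypothesis genuinely enters and deserves a proof: the traces $D'_1,\dots,D'_\ell$ are pairwise nonadjacent in $H$, so a maximum clique of $H$ through a vertex of $D'_j$ meets no other trace; since every vertex of $D'_j$ lies in such a clique, $\omega(H[D'_j])$ equals the common value forced by $\omega(G)=\omega(H)$, and this simultaneously gives $\omega(G[C'_j])=\omega(H[D'_j])$ for the recursive call and $\omega(G[C'_i])\le\omega(H[D'_j])$ for the $H$-free siblings you absorb. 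With those two insertions your argument closes.
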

\begin{proof}
First notice that a cograph $G$ is an isometric
embedding of a connected cograph $H$ if and only if
$H$ is an induced subgraph of $G$. This follows
since connected cographs have diameter two or, also,
because they are distance hereditary. The observation
is true for distance-hereditary graphs
simply by definition~\cite{kn:howorka}.

\medskip

\noindent
Let $H$ be a connected cograph. Write $\omega=\omega(H)$
and assume that every vertex of $H$ is in a clique of
cardinality $\omega$. Let $G$ be a cograph with
$\omega(G)=\omega$ such that $H$ is an induced subgraph of
$G$.

\noindent
First assume that $G$ is disconnected. Then the
vertices of $H$ are contained in one component of $G$ since
$H$ is connected. If $W$ is any other component, then
$G[W]$ has clique number at most $\omega$ and so there is a
homomorphism from this component to the component that
contains $H$. In other words, $H$ is a retract of $G$
if and only if $H$ is a retract of the component that
contains $H$ as an induced subgraph.
Henceforth, we may assume that $G$ is connected.

\medskip

\noindent
Consider a cotree for $G$. Since $G$ is connected the root
is an $\otimes$-node. Let
\[C_1,\dots,C_t\] be the
cocomponents of $G$.
Since $H$ is an induced subgraph with
the same cliquenumber as $G$, $H$ decomposes into
the same number of cocomponents $D_1,\dots,D_t$.
Notice that
\[\omega=\sum_{i=1}^t \omega(G[C_i])=\sum_{i=1}^t \omega(H[D_i]).\]
Therefore, since $\omega(H[D_i]) \leq \omega(G[C_i])$, we have
equality for each $i$, that is,
\[\omega(H[D_i])=\omega(G[C_i]) 
\quad\text{for $i \in \{1,\dots,t\}$.}\]

\medskip

\noindent
Now consider an $\oplus$-node. Let $C_1^{\prime},\dots,C_{\ell}^{\prime}$
be the sets of vertices of the
subgraphs of $G$ induced by the children.
Let
\[\omega^{\prime}= \max \;\{\; \omega(G[C_i^{\prime}])\;|\;
i \in \{1,\dots,\ell\}\;\}.\]
Write
\[D_i^{\prime}=V(H) \cap C_i^{\prime} 
\quad\text{for $i \in \{1,\dots,\ell\}$.}\]
Since $\omega(G)=\omega(H)$, we have that
there is at least one component $C_i^{\prime}$
such that
\[\omega^{\prime}=\omega(H[D_i^{\prime}]).\]

\medskip

 \noindent
For $G$ to retract to $H$ we must have that,
for every $j \in \{1,\dots,\ell\}$,
\[D^{\prime}_j \neq \es \quad\text{implies}\quad 
\omega(H[D_j^{\prime}]) = \omega(G[C_j^{\prime}]).\]
This condition is satisfied by virtue of the condition that every vertex of
$H$ is in a clique of cardinality $\omega$. Namely, this
implies that
for every $j \in \{1,\dots,\ell\}$,
\[D^{\prime}_j \neq \es \quad\text{implies}\quad 
\omega(H[D^{\prime}_j])=\omega^{\prime}.\]

\medskip

\noindent
Notice that this condition is necessary for
$H$ to be an absolute retract. This can be
seen as follows.
If there were a component $D_j$ with
\begin{equation}
\label{eq1}
\omega(H[D^{\prime}_j]) < 
\max \; \{\; \omega(H[D^{\prime}_i])\;|\; i \in \{1,\dots,\ell\}\;\}
\end{equation}
then we could construct a cograph $G$ such that $H$
is an induced subgraph of $G$ with $\omega(G)=\omega(H)$
but such that $G$ does not contract to $H$. Namely, add
one vertex to a component $D^{\prime}_j$ satisfying~\eqref{eq1}
as a true twin of a vertex which is in a maximum clique
of $H[D^{\prime}_j]$.

\medskip

\noindent
This proves the theorem.
\qed\end{proof}
 
\section{Concluding remarks}
\label{section conclusion}

One interesting problem that we leave open is whether the folding number is 
fixed-parameter tractable. 

\bigskip 

We proved that the retract problem for cographs is 
fixed-parameter tractable when parameterized by the number 
of vertices in the smaller of the two graphs. Perhaps more 
challenging and useful would be the `cleaning-parameter,' 
ie, the {\em difference\/} between the number of vertices 
of the two graphs, 
recently introduced by Marx and Schlotter~\cite{kn:marx1,kn:marx2}. 
This parameterization was investigated for the induced 
subgraph isomorphism problem, when restricted to various 
classes of graphs, eg, interval graphs, trees and planar graphs, 
and grids.  
As far as we know, whether cographs can be cleaned by a 
fixed-parameter algorithm is an open problem.


\begin{thebibliography}{99}

\bibitem{kn:bandelt2}Bandelt,~H., Retracts of hypercubes,
{\em Journal of Graph Theory\/} {\bf 8} (1984), pp.~501--510.

\bibitem{kn:bandelt}Bandelt,~H., A.~D\"ahlmann and H.~Sch\"utte, 
Absolute retracts of bipartite graphs, 
{\em Discrete Applied Mathematics\/} {\bf 16} (1987), pp.~191--215. 

\bibitem{kn:bodlaender}Bodlaender,~H., 
Achromatic number is NP-complete for cographs and interval graphs, 
{\em Information Processing Letters\/} {\bf 31} (1989), pp.~135--138. 

\bibitem{kn:bulatov2}Bulatov,~A., 
The complexity of the counting constraint satisfaction problem, 
{\em Proceedings of the $35^{\mathrm{th}}$ International 
Colloquium on Automata, Languages and Programming\/}, 
Springer-Verlag, LNCS~5125 (2008), pp.~646--661. 

\bibitem{kn:bulatov}Bulatov,~A., V.~Dalmau, M.~Grohe and D.~Marx, 
Enumerating homomorphisms, 
{\em Journal of Computer and System Sciences\/} {\bf 78} (2012), 
pp.~638--650. 

\bibitem{kn:courcelle}Courcelle,~B. and S.~Oum, 
Vertex-minors, monadic second-order logic, and a conjecture 
by Seese, {\em Journal of Combinatorial Theory, Series B\/} {\bf 97}, 
(2007), pp.~91--126. 

\bibitem{kn:chudnovsky2}Chudnovsky,~M., G.~Cornu\'ejols, 
X.~Liu, P.~Seymour and K.~Vu\v{s}kovi\'c, 
Recognizing Berge graphs, 
{\em Combinatorica\/} {\bf 25} (2005), pp.~143--186. 

\bibitem{kn:chudnovsky}Chudnovsky,~M., N.~Robertson, 
P.~Seymour and R.~Thomas, 
The strong perfect graph theorem, 
{\em Annals of Mathematics\/} {\bf 164} (2006), 
pp.~51--229. 

\bibitem{kn:chung}Chung,~M., 
$O(n^{2.5})$ time algorithms for the subgraph homomorphism problem 
on trees, 
{\em Journal of Algorithms\/} {\bf 8} (1987), pp.~106--112. 

\bibitem{kn:corneil}Corneil,~D., H.~Lerchs and L.~Stewart-Burlingham, 
Completent reducible graphs, 
{\em Discrete Applied Mathematics\/} {\bf 3} (1981), pp.~163--174. 

\bibitem{kn:corneil2}Corneil,~D., Y.~Perl and L.~stewart, 
A linear recognition algorithm for cographs, 
{\em SIAM Journal on Computing\/} {\bf 14} (1985), pp.~926--934. 

\bibitem{kn:damaschke}Damaschke,~P., 
Induced subgraph isomorphism for cographs is NP-complete, 
{\em Proceedings $16^{\mathrm{th}}$ WG'90\/}, 
Springer-Verlag, LNCS~484 (1991), pp.~72--78. 

\bibitem{kn:dyer}Dyer,~M. and D.~Richerby, 
An effective dichotomy for the counting constraint satisfaction problem. 
Manuscript on ArXiv: 1003.3879, 2011. 

\bibitem{kn:edmonds}Edmonds,~J., 
Paths, trees, and flowers, 
{\em Canadian Journal of Mathematics\/} {\bf 17} (1965), 
pp.~449--467. 

\bibitem{kn:farber}Farber,~M., G.~Hahn, P.~Hell and D.~Miller, 
Concerning the achromatic number of graphs, 
{\em Journal of Combinatorial Theory, Series B\/} {\bf 40} (1986), 
pp.~21--39. 
 
\bibitem{kn:feder}Feder,~T., P.~Hell and J.~Huang, 
List homomorphisms and circular arc graphs, 
{\em Combinatorica\/} {\bf 19} (1999), pp.~487--505. 

\bibitem{kn:feder2}Feder,~T., P.~Hell, P.~Johnsson, A.~Krokhin 
and G.~Nordh, 
Retractions to pseudoforests, 
{\em SIAM Journal on Discrete Mathematics\/} {\bf 24} (2010), 
pp.~101--112. 

\bibitem{kn:fomin}Fomin,~F., P.~Heggernes and D.~Kratsch, 
Exact algorithms for graph homomorphisms, 
{\em Theory of Computing Systems\/} {\bf 41} (2007), pp.~381--393. 

\bibitem{kn:garey}Garey,~M. and D.~Johnson, 
{\em Computers and intractability: a guide to the theory of 
NP-completeness\/}, W.~H.~Freeman and co., 1979. 

\bibitem{kn:gioan}Gioan,~E. and C.~Paul, 
Split-decomposition and graph-labelled trees: characterizations and 
fully dynamic algorithms for totally decomposable graphs. 
Manuscript on ArXiv: 0810.1823, 2008. 

\bibitem{kn:goldberg}Goldberg,~L., M.~Grohe, M.~Jerrum and 
M.~Thurley, 
A complexity dichotomy for partition functions with mixed signs, 
{\em SIAM Journal on Computing\/} {\bf 39} (2010), pp.~3336--3402. 

\bibitem{kn:golovach}Golovach,~P., B.~Lidick\'y, B.~Martin and 
D.~Paulusma, 
Finding vertex-surjective graph homomorphisms, 
{\em Acta Informatica\/} {\em 49} (2012), pp.~381--394. 

\bibitem{kn:golumbic}Golumbic,~M., 
Trivially perfect graphs, 
{\em Discrete Mathematics\/} {\bf 24} (1978), pp.~105--107. 

\bibitem{kn:golumbic2}Golumbic,~M. and C.~Goss,
Perfect elimination and chordal bipartite graphs,
{\em Journal of Graph Theory\/} {\bf 2} (1978), pp.~155--163.

\bibitem{kn:gotz}G\"otz,~M., C.~Koch and W.~Martens, 
Efficient algorithms for the tree homeomorphism problem, 
{\em Proceedings of the $11^{\mathrm{th}}$ International 
Conference on Database Programming Languages\/}, Springer-Verlag, 
LNCS~4797 (2007), pp.~17--31. 

\bibitem{kn:grohe}Grohe,~M., 
The complexity of homomorphism and constraint satisfaction 
problems seen from the other side, 
{\em Journal of the ACM\/}, {\bf 54} (2007). 

\bibitem{kn:grohe3}Grohe,~M., 
Personal communication. 

\bibitem{kn:grohe2}Grohe,~M., K.~Kawarabayashi, D.~Marx and 
P.~Wollan, 
Finding topological subgraphs is fixed-parameter tractable, 
{\em STOC'11, Proceedings of the $43^{\mathrm{rd}}$ Annual ACM 
Symposium on Theory of Computing\/}, ACM (2011), pp.~479--488.
 
\bibitem{kn:grotschel}Gr\"otschel,~M., L.~Lov\'asz and A.~Schrijver, 
Polynomial algorithms for perfect graphs. 
In (Berge, Chv\'atal eds.): 
{\em Topics on perfect graphs\/}, 
North-Holland, Mathematics Studies {\bf 88} (1984), pp.~325--256. 

\bibitem{kn:hahn}Hahn,~G. and C.~Tardif, 
Graph homomorphisms: structure and symmetry. 
In (G.~Hahn and G.~Sabidussi eds.) 
{\em Graph symmetry -- algebraic methods and applications\/}, 
NATO ASI Series C: Mathematical and Physical Sciences, Vol.~497, 
Kluwer, 1997, pp.~107--166. 

\bibitem{kn:harary}Harary,~F. and S.~Hedetniemi, 
The achromatic number of a graph, 
{\em Journal of Combinatorial Theory\/} {\bf 8} (1970), 
pp.~154--161. 

\bibitem{kn:hell5}Hell,~P., 
{\em R\'etractions de graphes\/}. PhD Thesis, Universit\'e de 
Montr\'eal, 1972. 
 
\bibitem{kn:hell}Hell,~P. and J.~Ne\v{s}et\v{r}il, 
On the complexity of $H$-coloring, 
{\em Journal of Combinatorial Theory, Series B\/} {\bf 48} (1990), 
pp.~92--110. 

\bibitem{kn:hell3}Hell,~P. and J.~Ne\v{s}et\v{r}il, 
The core of a graph, 
{\em Discrete Mathematics\/} {\bf 109} (1992), pp.~117--126. 

\bibitem{kn:hell4}Hell,~P. and J.~Ne\v{s}et\v{r}il, 
{\em Graphs and homomorphisms\/}, 
Oxford Universty Press, 2004. 
\bibitem{kn:howorka}Howorka,~E.,

A characterization of distance-hereditary graphs,
{\em The Quarterly Journal of Mathematics\/} {\bf 28} (1977),
pp.~417--420.

\bibitem{kn:imrich}Imrich,~W. and S.~Klav\v{z}ar, 
Retracts of strong products of graphs, 
{\em Discrete Mathematics\/} {\bf 109} (1992), pp.~147--154. 

\bibitem{kn:kilpelainen2}Kilpel\"ainen,~P. and H.~Mannila, 
Quey primitives for tree-structured data, 
{\em Proceedings of the $5^{\mathrm{th}}$ Annual Symposium on 
Combinatorial Pattern Matching\/}, Springer-Verlag, LNCS~807 
(1994), pp.~213--225. 
 
\bibitem{kn:kilpelainen}Kilpel\"ainen,~P. and H.~Mannila, 
Ordered and unordered tree inclusion, 
{\em SIAM Journal on Computing\/} {\bf 24} (1995), pp.~340--356. 

\bibitem{kn:klavzar}Klav\v{z}ar,~S., 
Absolute retracts of splitgraphs, 
{\em Discrete Mathematics\/} {\bf 134} (1994), pp.~75--84. 

\bibitem{kn:kruskal}Kruskal,~J., 
Well-quasi-ordering, the tree theorem, and Vazsonyi's conjecture, 
{\em Transactions of the American Mathematical Society\/} 
{\bf 95} (1960), 
pp.~210--225. 

\bibitem{kn:loten}Loten,~C., 
{\em Retractions of chordal and related graphs\/}. 
PhD Thesis, Simon Fraser University, 2003. 

\bibitem{kn:manlove}Manlove,~D. and C.~McDiarmid, 
The complexity of harmonious coloring for trees, 
{\em Discrete Applied Mathematics\/} {\bf 57} (1995), pp.~133--144. 
 
\bibitem{kn:marx1}Marx,~D. and I.~Schlotter, 
Parameterized graph cleaning problems, 
{\em Discrete Applied Mathematics\/} {\bf 157} (2009), 
pp.~3258--3267. 

\bibitem{kn:marx2}Marx,~D. and I.~Schlotter, 
Cleaning interval graphs, 
{\em Algorithmica\/} {\bf 65} (2013), pp.~275--316. 

\bibitem{kn:mate}M\'at\'e,~A., 
A lower estimate for the achromatic number of irreducible graphs, 
{\em Discrete Mathematics\/} {\bf 33} (1981), pp.~171--183. 

\bibitem{kn:matousek}Matou\v{s}ek and R.~Thomas, 
On the complexity of finding iso- and other morphisms 
for partial $k$-trees, 
{\em Discrete Mathematics\/} {\bf 108} (1992), pp.~343--364. 

\bibitem{kn:naserasr}Naserasr,~R. and Y.~Nigussie, 
On a new reformulation of Hadwiger's conjecture, 
{\em Discrete Mathematics\/} {\bf 306} (2006), pp.~3136--3139. 

\bibitem{kn:nesetril}Ne\v{s}et\v{r}il,~J. and P.~de~Mendez, 
Cuts and bounds, 
{\em Discrete Mathematics\/} {\bf 302} (2005), pp.~211--224. 

\bibitem{kn:pesch}Pesch,~E. and W.~Poguntke, 
A characterization of absolute retracts of $n$-chromatic graphs, 
{\em Discrete Mathematics\/} {\bf 57} (1985), pp.~99--104. 

\bibitem{kn:pinter}Pinter,~R., O.~Rokhlenko, D.~Tsur 
and M.~Ziv-Ukelson, 
Approximate labelled subtree homeomorphism, 
{\em Journal of Discrete Algorithms\/} {\bf 6} (2008), 
pp.~480--496. 

\bibitem{kn:reyner}Reyner,~S., 
An analysis of good algorithms for the subtree problem, 
{\em SIAM Journal on Computing\/} {\bf 6} (1977), 
pp.~730--732. 

\bibitem{kn:shamir}Shamir,~R. and D.~Tsur, 
Faster subtree isomorphism, 
{\em Journal of Algorithms\/} {\bf 33} (1999), pp.~267--280. 

\bibitem{kn:sikora}Sikora,~F., 
An (almost complete) state of the art around the graph motif 
problem. Manuscript 2012. 

\bibitem{kn:thomasse}Thomass\'e,~S., 
On better-quasi-ordering countable series-parallel orders, 
{\em Transactions of the American Mathematical Society\/} {\bf 352} 
(2000), pp.~2491--2505. 

\bibitem{kn:vikas}Vikas,~N., 
A complete and equal computational complexity classification 
of compaction and retraction to all graphs with at most four 
vertices and some general results, 
{\em Journal of Computer and System Sciences\/} {\bf 71} 
(2005), pp.~406--439. 

\bibitem{kn:wahlstrom}Wahlstr\"om,~M., 
New plain-exponential time classes for graph homomorphism, 
{\em Theory of Computing Systems\/} {\bf 49} (2011), pp.~273--282. 

\bibitem{kn:wolk}Wolk,~E., 
A note on ``The comparability graph of a tree,'' 
{\em Proceedings of the American Mathematical Society\/} 
{\bf 16} (1965), pp.~17--20. 

\end{thebibliography}
\end{document}